\documentclass[a4paper,reqno,12pt]{article}
\usepackage{mathrsfs}

\usepackage{amsmath}
\usepackage{amsfonts}
\usepackage{amssymb}
\usepackage{amsthm}
\usepackage[mathscr]{eucal}
\usepackage[hmargin=2.5cm,vmargin=2.5cm]{geometry}

\usepackage{graphicx}
\usepackage{indentfirst}
\usepackage{newlfont}

\numberwithin{equation}{section}

\newcommand{\real}{\mathbb{R}}

\theoremstyle{plain}
\newtheorem{theorem}{Theorem}[section]
\newtheorem{lemma}[theorem]{Lemma}

\newtheorem{definition}[theorem]{Definition}
\newtheorem{proposition}[theorem]{Proposition}

\theoremstyle{remark}
\newtheorem{remark}[theorem]{Remark}

\linespread{1.6}

\begin{document}

\title{
Dynamics of 2D Stochastic non-Newtonian fluids
driven by fractional Brownian motion
\footnote{Support by NSFC (No. 10971225) and SRF for ROCS, SEM, China}}
\author{Jin Li  and Jianhua Huang\\
Department of Mathematics, National University of Defense Technology,\\
Changsha 410073,  P.R.China }

\date{\today}

\maketitle

\abstract{
A 2D Stochastic incompressible non-Newtonian fluids driven by
fractional Bronwnian motion with Hurst parameter $H \in (1/2,1)$ is studied.
The Wiener-type stochastic integrals are introduced for infinite-dimensional fractional Brownian motion.
Four groups of assumptions, including the requirement of Nuclear operator or Hilbert-Schmidt operator,
are discussed.
The existence and regularity of stochastic convolution for the corresponding additive linear 
stochastic equation are obtained under each group of assumptions.
Mild solution are then obtained for the non-Newtonian systems by the modified fix point theorem in
the selected intersection space.
When the domain is square, the random dynamical system generated by non-Newtonian systems 
has a random attractor under some condition on the spectrum distribution of 
the corresponding differential operator.
}

\textbf{Keywords:} infinite dimensional fractional Brownian motion, stochastic non-Newtonian fluid, random attractor

\textbf{MSC2010:} 35Q35 35R60 60G22 37L55

\section{Introduction}

In this paper, we consider the following 2D stochastic non-Newtonian fluids driven by
fractional Brownian motion (fBm, for short) with Hurst parameter $H \in (\frac{1}{2},1)$:
\begin{align}\label{Problem: initial}
     \frac{\partial{u}}{\partial{t}} + (u \cdot \nabla)u + \nabla p
    & = \nabla \cdot (\mu (u)e-2\mu_1 \Delta e)+ \Phi \frac{dB^H(t)}{dt}
     \quad & x \in \mathcal{O}, \ t>0  \\
    \nabla \cdot u &=0,
     \quad & x \in \mathcal{O}, \ t>0  \\
     u &= 0, \ \tau_{ijk} \eta_j \eta_k = 0,
     \quad & x \in \partial \mathcal{O}, \ t \geq 0  \\
     u &= u_0, \quad &x \in \mathcal{O}, t=0  \label{Problem: initial end}
\end{align}
where $\mathcal{O}$ is a smooth bounded domain of $\mathbb{R}^2$,
$\tau_{ij}$ is the components of the stress tensor,
$\tau_{ijk}$ is the components of the first multipolar stress tensor,
and $p$ is the pressure.
$e_{ij}$ are the components of the rate of deformation tensor, i.e.
\begin{equation}
    e_{ij} = \frac{1}{2} \left( \frac{\partial u_i}{\partial x_j}
        + \frac{\partial u_j}{\partial x_i} \right) .
\end{equation}
$\epsilon, \mu_0, \mu_1 >1$ and $\alpha, 0< \alpha \leq 1$, are constitutive parameters.
The noise, modeled by the formal derivative of fBm, enters linearly in the equation.
This model describes the simplest case
when a 2D isothermal, nonlinear, incompressible bipolar viscous fluids
perturbed by the noise with long range dependence.

There are many works concerning the unique existence, regularity and asymptotic behavior of solution
to the incompressible non-Newtonian fluids or its associated versions (see e.g.
\cite{Bellout Bloom Necas, Bloom Hao existence, Guo Guo, Ladyzhenskaya viscousflow, Zhao Duan, Zhao Zhou}).
For instance, Zhao and Duan \cite{Zhao Duan} established the existence of random attractor
for some non-Newtonian model with additive noise.
Noise are intrinsic effects in a variety of settings and spatial scales.
It could be mostly obviously influential at the microscopic and small scales
but indirectly it plays a vital role in microscopic phenomena.
We study fBm other than the standard commonly used Wiener process as the source of noise.
The fBm is a family of Gaussian processes that is indexed by the Hurst parameter $H \in (0,1)$.
These processes with values in $\mathbb{R}$ were introduced by Kolmogorov \cite{Kolmogorov}
and some useful properties of these process were given by Mandelbrot and Van Ness \cite{Mandelbrot VanNess}.
For $H \neq \frac{1}{2}$ the fBm is not a semi-martingale
and the increments of the process are not independent.
These properties can be used in modeling ``cluster'' phenomena (system with memory and persistence)
such as hydrology \cite{Hurst}, economic data \cite{Mandelbrot}
and telecommunications \cite{Leland Taqqu Willinger}.
Since there are limited publish works 
(\cite{Duncan Maslowski Duncan, Duncan Maslowski PasikDuncan, Garrido-Atienza Lu Schmalfuss, Maslowski Schmalfus, Tindel Tudor Viens})
on infinite-dimensional fBm-driven equations, 
we first introduce the Wiener-type stochastic integral with respect to one-dimensional fBm in section 2.
Then we discuss the existence and regularity of infinite-dimensional stochastic convolution
under several existed assumptions
(such as the requirement of Nuclear operator or Hilbert-Schmidt operator) for this model.
We also study a fundamental example by some subtle calculus on the spectrum of differential operator
without compactness assumption on other parameter.
Inspired by \cite{Lions Magenes} ,in section 3 we obtain the mild solution for \eqref{Problem: initial} by
a modified fix point theorem which needs careful estimation on the selected intersection space.
Since the fractional noise is not Markovian,
the solution to \eqref{Problem: initial} can not be expected to define a Markov process.
Therefore, the approach to study the long-time behavior of solution via invariant measure is not
an option here.
In particular, Corollary 4.4 in \cite{Crauel Flandoli} can not be applied here.
Consequently, our analysis in dynamics is instead based on the framework of random dynamical system(RDS),
which more or less requires the driving process to have stationary increments.
Since the foundational work in \cite{Crauel Flandoli, Crauel Debussche Flandoli}
the long time behavior of SPDE has been extensively investigated by means of proving the
existence of a random attractor 
(e.g. \cite{Bates Lu Wang, Gess Liu Rochner, Garrido-Atienza Maslowski Schmalfuss, Zhao Duan}).
In this work we use the stationary generalized stochastic integral
(known as fractional Ornstein-Uhlenback process) to construct RDS associated to \eqref{Problem: initial}
and obtain a random attractor under certain condition in section 4.

Compared with \cite{Zhao Duan}, the difficulties in our work are:
(i) Unlike the finite-dimensional white noise, handling infinite-dimensional fBm needs addition
assumptions on parameter (such as the requirement of Nuclear operator or Hilbert-Schmidt operator)
and subtle estimation on spectrum of differential operator.
(ii) Unlike the classical Ito integral with Brownian motion,
the stochastic integrals with fBm are more complicated.
There are several type of integration for fBm and each only preserve part of the properties of integrator.
(iii) The principal method for creating weak solutions to stochastic differential equations is
transformation of drift via the Girsanov theorem (cf. \cite{Karatzas Shreve}).
However, significant difficulties arise when the application of infinite-dimensional Girsanov theorem
is used for this nonlinear fBm-driven stochastic fluid equations.
Thus, we have to use the mild solution.
(iv) We emphasize that the cocycle property of RDS has to be satisfied for any $\omega \in \Omega$.
It is not sufficient if this equation is only true almost surely
where the exceptional set may depend on space variable or time varialbe.
Thus, we have to switch fBm to its incomplete equivalent canonical realization
and restrict the fBm parameter $H> \frac{1}{2}$ which allow us to solve the equation
using Wiener-type stochastic integral for deterministic integrands understood in a pathwise way.

We express this problem by the standard mathematical setting.
We use lowercase $c_i, i\in \mathbb{N}$ for global constants and
capital $C$ for local constants which may change value from line to line.
Denote

$\mathcal{V} = \left\{ \phi = (\phi_1, \phi_2) \in \left( C_0^{\infty} \left( \mathcal{O} \right) \right)^2 :
\ \nabla \cdot \phi =0 \text{ and } \phi=0 \text{ on } \partial \mathcal{O} \right\}$.

$H = $ the closure of $ \mathcal{V}$ in $\left( L^2\left(\mathcal{O}\right)\right)^2$ with norm $|\cdot|$.

$V = $ the closure of $\mathcal{V}$ in  $\left( H^2(\mathcal{O})\right)^2$ with norm $|\cdot|_V$.

Thus, $(V,H,V')$ is a Gelfand triple.
Define a bilinear form $a(\cdot, \ \cdot):V \times V \rightarrow \mathbb{R}$,
\begin{equation}
    a(u,v)=\frac{1}{2}(\triangle u, \triangle v).
\end{equation}
According to \cite{Bloom Hao existence} Lemma 2.3,
we can use Lax-milgram Theorem to define
$A \in \mathcal{L}(V,V')$:
\begin{equation}
    <Au,v>=a(u,v) \quad \forall \ u,v \in V.
\end{equation}
And we have
  \begin{enumerate}\label{Proposition: A1}
    \item[(i)] Operator $A$ is an isometric form $V$ to $V'$.
    Furthermore, let $D(A)=\{ u \in V : a(u,v)=(f,v), f \in H \}$.
    Then $A \in \mathcal{L}(D(A),H)$ is an isometric form $D(A)$ to $H$.

    \item[(ii)] Operator $A$ is self-adjoint positive with compact inverse.
    By Hilbert Theorem, there exist eigenvectors $\{ e_i \}_{i=1}^{\infty} \subset D(A)$
    and eigenvalues $\{ \lambda_i \}_{i=1}^{\infty}$ s.t.
    \begin{align}
      & Ae_i=\lambda e_i, \quad e_i \in D(A), \quad i=1,2, \cdots \\
      & 0<\lambda_1 \leq \lambda_2 \leq \cdots \leq\lambda_i \leq \cdots,
      \quad \lim_{i\rightarrow \infty} \lambda_i = \infty.
    \end{align}
    And $\{ e_i \}_{i=1}^{\infty}$ form an orthonormal basis for $H$.
  \end{enumerate}
Define the trilinear form:
\begin{equation}
    b(u,v,w)=\sum_{i,j=1}^{2} \int_{\mathcal{O}} u_i \frac{\partial v_j}{\partial x_i} w_j dx
    \quad \forall u,v,w \in H_0^1 (\mathcal{O}).
\end{equation}
and the functional $B(u,v) \in V'$:
\begin{equation}
    <B(u,v),w>=b(u,v,w) \quad \forall w \in V
\end{equation}
Denote $B(u):=B(u,u)\in V'$.
Define $N(u) \in V'$ as
\begin{equation}
    <N(u),v>=\int_{\mathcal{O}} \mu(u) e_{ij}(u) e_{ij}(v) dx
    \quad \forall v \in V.
\end{equation}
For the properties of operator $A$, $B$ and $N$ we refer to \cite{Zhao Zhou}.
Comprehensively, we have the following abstract evolution equation from problem
\eqref{Problem: initial}-\eqref{Problem: initial end}:
\begin{equation}\label{Problem: differential form}
  \left\{
  \begin{split}
    du+ \left( 2\mu_1 Au +B(u)+N(u)  \right) dt &= \Phi dB^H(t),  \\
    u(0) & =u_0.
  \end{split}  \right.
\end{equation}
Without loss of generality, we set $\mu_1 = 1$ in the sequel.

\section{FBm and random dynamical systems}
Since the derivative of fBm exists almost nowhere,
we seek the solution in the integral form.
There are several approach to define an integral for one-dimensional fBm
and each has its advantage
(for a useful summary we refer to \cite{Biagini Hu Oksendal Zhang}).
In this paper, we adopt the Wiener integrals since they deal with the simplest case of
deterministic integrands.
However, the assumption on noise driven by infinite dimensional fBm varies.
Therefore, we first introduce the general framework of the Wiener-type stochastic integral with
respect to infinite dimensional fBm, then discuss three sets of assumptions and obtain the desired result.

Let $\beta^H(t)$ be the one-dimensional fBm with Hurst parameter $H$.
Throughout this paper we only consider the case $H \in (\frac{1}{2},1)$.
For a survey of Winer-type stochastic integral we refer to \cite{Biagini Hu Oksendal Zhang}.
By definition $\beta^H$ is a centered Gaussian process with covariance
\begin{equation}
    R(t,s)=\mathbb{E}(\beta^H(t) \beta^H(t))= \frac{1}{2} (t^{2H}+s^{2H}-|t-s|^{2H}).
\end{equation}

\begin{remark}
  Wether the symbol $H$ represents the Hurst parameter or Hilbert space depends on context.
\end{remark}

$\beta^H$ has the following Wiener integral representation:
\begin{equation}\label{Equation: Wiener inte repre}
    \beta^H(t) = \int_0^t K_H(t,s) d \beta (s),
\end{equation}
where $\beta$ is a Wiener process, and $K_H(t,s)$ is the kernel given by
\begin{equation}\label{Equation: defi of kernel}
    K_H(t,s)
    = c_H \left( \frac{t}{s} \right)^{H-\frac{1}{2}}  (t-s)^{H- \frac{1}{2}}
      + s^{\frac{1}{2}-H} F(\frac{t}{s}).
\end{equation}
$c_H$ is a constant and
\begin{equation}
    F(z) = c_H \left( \frac{1}{2} - H \right)
             \int_0^{z-1} r^{H - \frac{3}{2}} \left( 1-(1+r)^{H- \frac{1}{2}} \right) dr.
\end{equation}
By \eqref{Equation: defi of kernel} we obtain
\begin{equation}
    \frac{\partial K_H}{\partial t}(t,s)
      = c_H (H-\frac{1}{2}) (t-s)^{H-\frac{3}{2}} \left( \frac{s}{t}\right)^{\frac{1}{2}-H}.
\end{equation}
Denote by $\mathscr{E}$ the linear space of step function of the form
\begin{equation}
    \phi(t) = \sum_{i=1}^{n} a_i 1_{(t_i,t_{i+1} ]} (t)
\end{equation}
where $n \in \mathbb{N}$, $a_i \in \real$ and by $\mathscr{H}$ the closure of $\mathscr{E}$
with respect to the scalar product
\begin{equation}
    <1_{[0,t]}, 1_{[0,s]}>_{\mathscr{H}} = R(t,s)
\end{equation}
For $\phi \in \mathscr{E}$ we define its Weiner integral
with respect to the fBm as
\begin{equation}
    \int_0^T \phi(s) d \beta^H(s) = \sum_{i=1}^{n} a_i (\beta^H_{t_{i+1}} - \beta^H_{t_{i}}).
\end{equation}
The mapping
\begin{equation}
    \phi = \sum_{i=1}^{n} a_i 1_{(t_i,t_{i+1} ]}
      \rightarrow \int_0^T \phi(s) d \beta^H(s)
\end{equation}
is an isometry between $\mathscr{E}$ and the linear space $span \{ \beta^H(t), 0\leq t \leq T \} $
viewed as a subspace of $L^2(0,T)$
and it can be extended to an isometry between $\mathscr{H}$
and the $\overline{span}^{L^2} \{ \beta^H(t), 0\leq t \leq T  \}$.
The image on an element $\Psi \in \mathscr{H}$ by this isometry is called the Wiener integral of
$\Psi$ with respect to $\beta^H$.

Next we give a characterization of the so-called reproducing kernel Hilbert space $\mathcal{H}$.
Consider space $L^2(0,T)$ equipped with the twisted scalar product
\begin{equation}
    <f,g>_{\mathcal{H}} := H(2H-1) \int_0^T \int_0^T f(s)g(t) |s-t|^{2H-2} ds dt.
\end{equation}
Since
\begin{equation}
    <1_{[0,t]}, 1_{[0,s]}>_{\mathscr{H}}
    = H(2H-1) \int_0^t \int_0^s |u-r|^{2H-2} du dr
    = R(t,s)
\end{equation}
We have that $\mathcal{H}$
can be represented by the closed of $L^2(0,T)$ with respect to the twisted scalar product.
Namely,
$\mathcal{H}= \overline{(L^2(0,T),<,>_{\mathcal{H}})} = \overline{( \mathscr{E},<,>_{\mathcal{H}})}$.
In \cite{Pipiras Taqqu} it is shown that the elements of $\mathcal{H}$ may not be functions
but distributions of negative order.
By \cite{Memin Mishura Valkeila} we have the following inclusion
$L^2(0,T) \subset L^{1/H}(0,T) \subset  \mathcal{H}$.
We now introduce the linear operator $K^*_H$ defined on $\phi \in \mathscr{E}$ as follows:
\begin{equation}\label{Equation: K* representation}
    (K^*_H \phi )(s) =
      \int_s^T \phi(t) \frac{\partial K_H}{\partial t}(t,s) dt.
\end{equation}
We refer to \cite{Alos Mazet Nualart} for the proof of the fact that
$K^*_H$ is an isometry between the space $\mathscr{E}$ and $L^2(0,T)$ that
can be extended to the Hilbert space  $\mathscr{H}$ and $L^2(0,T)$, i.e.,
$\mathcal{H} = (K_H^*)^{-1} (L^2 (0,T))$.
As a consequence, we have the following relationship between the Wiener integral with respect
to fBm and the Wiener integral with respect to the Wiener process:
\begin{equation}\label{Equation: fBm represetation by Bm}
    \int_0^t \phi(s)d\beta^H(s) = \int_0^t (K^*_H \phi)(s) d \beta (s)
\end{equation}
for every  $\phi \in \mathscr{H}$ if and only if $K^*_H \phi \in L^2(0,T)$.
Since we work with Wiener integral over Hilbert space $H$,
we have that if $u \in L^2(0,T;H)$ is a deterministic function,
then the relation \eqref{Equation: fBm represetation by Bm} holds
and the Wiener integral on the righthand side being well defined in $L^2(\Omega;H)$
if $K^*_H u$ belongs to $L^2(0,T;H)$.


Next we introduce the infinite dimensional fractional Brownian motion
and corresponding stochastic integration.
Let $Q$ be a self-adjoint and positive linear operator on $H$.
Assume that there exists a sequence of nonnegative numbers
$\{ \widetilde{\lambda}_i \}_{i \in \mathbb{N}}$ s.t.
\begin{equation}
  Q e_i =  \widetilde{\lambda}_i e_i,  \quad i=1,2,\cdots
\end{equation}
We formally define the infinite dimensional fBm on $H$ with covariance operator $Q$ as
  \begin{equation}
    B^H(t)=\sum^{\infty}_{i=1} \sqrt{\widetilde{\lambda}_i} e_i \beta_i^H(t)
  \end{equation}
  where $\{ \beta_i^H(t) \}_{i \in \mathbb{N}}$ is a sequence of real stochastically independent
  one-dimensional fBm's.
This process, if convergence, is a $H$-valued Gaussian process, it starts from 0,
has zaro mean and covariance
\begin{equation}
   \mathbb{E}(B^H(t) B^H(t))=  R(t,s)Q .
\end{equation}
Let $(\Phi_s)_{0\leq s \leq T}$ be a deterministic function with values in $\mathcal{L}(H)$,
the space of all the bounded linear operator from  $H$ to $H$.
The stochastic integral of $\Phi$ with respect to $B^H$ is formally define by
\begin{equation}\label{Equation: defi of W inte of inf fBm}
    \int_0^t \Phi_s dB^H(s)
     := \sum_{i=1}^{\infty}
       \sqrt{\widetilde{\lambda}_i}  \int_0^t \Phi_s e_i d \beta_i^H (s)
      =  \sum_{i=1}^{\infty}
       \sqrt{\widetilde{\lambda}_i}  \int_0^t (K^*_H(\Phi e_i))_s d \beta_i (s)
\end{equation}
where $\beta_i$ is the standard Brownian motion used to
represent $\beta^H$ as in \eqref{Equation: Wiener inte repre}.
The above sum may not converge.
However, as we are about to see,
the linear additive stochastic equation can have a mild solution
even if $ \int_0^t \Phi_s dB^H(s) $ is not properly defined as a $H$-valued Gaussian random variable.

Back to non-Newtonian systems.
Our goal is to find a mild solution of problem \eqref{Problem: differential form}.
Namely, seek a function  $u \in C([0,T];H)\cap L^2(0,T;V)$
such that the following integral equation holds for all $\omega \in \Omega$
  \begin{equation}\label{Equation: integral form}
    u(t)=S(t)u_0 -\int_0^t S(t-s)B(u(s))ds -\int_0^t S(t-s) N(u(s))ds
    + \int_0^t S(t-s)\Phi dB^H(s).
  \end{equation}
Here $S(t):=e^{-tA}=\int_{0}^{\infty} e^{-t\lambda} d E_{\lambda}$,
is an analytic semigroup generated by $A$
Since $A$ is a densely defined self-adjoint bounded-below operator in Hilbert space $H$
and hence a sector operator (see, for instance, \cite{DanielHenry} section 1.3).
The first and second integral are Bochner integral,
the last integral is the Wiener-type stochastic integral
defined by \eqref{Equation: defi of W inte of inf fBm}.
By the form of integral equation, we care the stochastic convolution $\int_0^t S(t-s)\Phi dB^H(s)$
other than $\int_0^t \Phi dB^H(s)$.
Denote
\begin{equation}
      z(t) = \int_0^t S(t-s) \Phi dB^H (s).
\end{equation}
Then $z$, if it is well defined,
is the unique mild solution of the following linear stochastic evolution equation
\begin{equation}
    dz(t) = Az(t)dt + \Phi dB^H(t), \quad z(0)=0 \in H.
\end{equation}
In order to obtain the solution of non-Newtonian systems driven by fBm
we only need to guarantee the existence and regularity of the stochastic Wiener-type convolution
(see the proof of Theorem \ref{Theorem: existence of solu} ).
There are three groups of assumption on stochastic convolution:
\begin{enumerate}
  \item[(A1)]
     $Q \in \mathcal{L}_1(H)$,
     \ $\Phi \equiv id_H$;
  \item[(A2)]
    $Q \equiv id_H$,
    \qquad $\Phi \in \mathcal{L}_2(H)$;
  \item[(A3)]
    $Q \equiv id_H$,
    \qquad $\Phi \in \mathcal{L}(H)$ such that $ \Phi \Phi^* \in \mathcal{L}_1(H)$.
\end{enumerate}
Here we use notation: $\mathcal{L}_1(H)$ the space of all nuclear operators on $H$;
$\mathcal{L}_2(H)$ the space of all Hilbert-Schmidt operators on $H$
(for detailed see Appendix C in \cite{DaPrato Zabczyk stochastic}).
These assumptions come from Maslowski, Schmalfuss \cite{Maslowski Schmalfus}
and Duncan, Maslowski, Pasik-Duncan \cite{Duncan Maslowski PasikDuncan}
and Tindel, Tudor, Veins \cite{Tindel Tudor Viens} respectively.
We first give a general consequence about the existence and regularity of stochastic convolution
under each of these assumptions,
then remark it.

\begin{proposition}\label{Proposition: exis an regu of stoch convol}
  Under each of assumption (A1)-(A4), the stochastic convolution
  $ z(t) = \int_0^t S(t-s) \Phi dB^H (s)$ is well defined and $z \in C((0,T];V)$.
\end{proposition}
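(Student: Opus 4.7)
The plan is to reduce everything to scalar one-dimensional fBm Wiener integrals by expanding along the eigenbasis $\{e_i\}$ of $A$, and then to control the resulting spectral series using a variance estimate for the fractional Ornstein--Uhlenbeck integral together with the structural constraint on $\Phi$ and $Q$.

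Writing $\Phi e_i=\sum_j\phi_{ij}e_j$ and using $S(t-s)e_j=e^{-(t-s)\lambda_j}e_j$, the definition \eqref{Equation: defi of W inte of inf fBm} yields, componentwise,
\begin{equation*}
(z(t),e_j)=\sum_{i=1}^{\infty}\sqrt{\widetilde{\lambda}_i}\,\phi_{ij}\int_0^{t}e^{-(t-s)\lambda_j}\,d\beta_i^H(s).
\end{equation*}
Since the $\beta_i^H$ are independent across $i$ and $|z(t)|_V^2$ is equivalent to $\sum_j\lambda_j(z(t),e_j)^2$, one obtains
\begin{equation*}
\mathbb{E}|z(t)|_V^2\;\le\;C\sum_{j}\lambda_j\Bigl(\sum_i\widetilde{\lambda}_i\phi_{ij}^2\Bigr)\sigma_j^2(t),\qquad \sigma_j^2(t):=\mathbb{E}\Bigl(\int_0^{t}e^{-(t-s)\lambda_j}\,d\beta^H(s)\Bigr)^{2}.
\end{equation*}
A rescaling in the double integral $H(2H-1)\!\int_0^{t}\!\int_0^{t}e^{-\lambda_j(2t-u-v)}|u-v|^{2H-2}\,du\,dv$ representing $\sigma_j^2(t)$ gives $\sigma_j^2(t)\le C_H\lambda_j^{-2H}$, uniformly in $t\in[0,T]$. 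Because $H>\tfrac12$ and $\lambda_j\ge\lambda_1>0$, the emerging factor $\lambda_j^{1-2H}$ is bounded, and so the whole series is dominated by $\sum_{i,j}\widetilde{\lambda}_i\phi_{ij}^2$. This quantity equals $\mathrm{tr}(Q)$ under (A1) (where $\phi_{ij}=\delta_{ij}$), $\|\Phi\|_{\mathcal{L}_2(H)}^2$ under (A2), and $\mathrm{tr}(\Phi\Phi^*)$ under (A3). Hence in each case $z(t)\in L^2(\Omega;V)$.

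For continuity in $V$, I would use the standard decomposition
\begin{equation*}
z(t+h)-z(t)=\int_t^{t+h}S(t+h-s)\Phi\,dB^H(s)+\int_0^{t}\bigl(S(h)-I\bigr)S(t-s)\Phi\,dB^H(s)
\end{equation*}
and repeat the spectral bookkeeping: the second summand is treated via $|e^{-h\lambda_j}-1|^2\le C h^{2\gamma}\lambda_j^{2\gamma}$, and the first via the short integration window $[t,t+h]$. For a suitable $\gamma\in(0,H-\tfrac12)$ this should yield, on every interval $[\delta,T]\subset(0,T]$, $\mathbb{E}|z(t+h)-z(t)|_V^2\le C(\delta,T)h^{2\gamma}$. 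Gaussianity of $z$ upgrades this to all moments via hypercontractivity, and a Kolmogorov--Chentsov argument then produces a modification in $C([\delta,T];V)$; letting $\delta\to 0$ gives $z\in C((0,T];V)$.

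The main obstacle is the \emph{unified} spectral handling across (A1)--(A3): each demands a slightly different collapse of the double sum $\sum_{i,j}\widetilde{\lambda}_i\phi_{ij}^2$, and the common technical core is the bound $\sigma_j^2(t)\le C_H\lambda_j^{-2H}$, which rests on the explicit form \eqref{Equation: defi of kernel} of $K_H$ and crucially on the hypothesis $H>\tfrac12$. The restriction to $(0,T]$ rather than $[0,T]$ is intrinsic: $z(0)=0\in H$ need not belong to $V$, and indeed the estimate for $\mathbb{E}|z(t+h)-z(t)|_V^2$ degenerates as $t\to 0^+$, reflecting the smoothing of the analytic semigroup $S(t)$ that lifts $z(t)$ into $V$.
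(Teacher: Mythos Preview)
Your spectral-expansion argument for (A1)--(A3) is sound and actually more self-contained than the paper's proof, which proceeds by invoking known results: Proposition~3.1 of Maslowski--Schmalfu\ss\ for (A1), Proposition~2.6 of Duncan--Maslowski--Pasik-Duncan for (A2), and Theorems~1 and~4 of Tindel--Tudor--Viens for (A3). Your route has the advantage of exposing the common mechanism---the scalar variance bound $\sigma_j^2(t)\le C_H\lambda_j^{-2H}$ combined with summability of $\sum_{i,j}\widetilde\lambda_i\phi_{ij}^2$---whereas the paper's route gives sharper H\"older regularity (e.g.\ $C^{\alpha}([0,T];D(A^{1/2}))$ for $\alpha<H-\tfrac12$) essentially for free by citing the literature.

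However, there is a genuine gap: you do not treat (A4) at all, and your unified scheme \emph{fails} there. Under (A4) one has $Q=\Phi=\mathrm{id}$, so $\widetilde\lambda_i=1$ and $\phi_{ij}=\delta_{ij}$, and your controlling quantity becomes $\sum_{i,j}\widetilde\lambda_i\phi_{ij}^2=\sum_j 1=\infty$. The step where you bound $\lambda_j^{1-2H}\le\lambda_1^{1-2H}$ and then sum the remaining factor is precisely what breaks: with no compactness coming from $Q$ or $\Phi$, the convergence must come from the decay of $\lambda_j^{-2H}$ itself. The paper handles (A4) separately by using the explicit eigenvalue lower bound $\lambda_{mn}\ge (m^2+n^2)^2$ on the square $[-\pi,\pi]^2$, which makes the series $\sum_i\lambda_i^{-2H}\le\sum_{m,n}(m^2+n^2)^{-4H}=2\beta_D(4H)\zeta(4H)<\infty$ converge directly. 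Your proposal needs an additional argument of this kind; without it the Proposition is not proved for (A4), which is in fact the case the paper singles out as the most delicate (``they are not compact'').
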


\begin{proof}
  The assumption (A4) and the corresponding proof we will state later.\\
  Under assumption (A1):\\
  By setting $V=H$, $E=V$ and $\Phi \equiv id_H$ in \cite{Maslowski Schmalfus},
  the infinite dimensional fBm $B^H(t)=\sum^{\infty}_{i=1} \sqrt{\widetilde{\lambda}_i} e_i \beta_i^H(t)$
  is indeed a $H$-valued process with covariance $Q$ since $Q \in \mathcal{L}_1(H)$.
  In order to apply the theory in \cite{Maslowski Schmalfus} we only need to verify that
  the semigroup $S$ is analytic and exponentially stable, that is,
  \begin{equation}
    |S(t)|_{\mathcal{L}(V)} \leq C e^{-at}.
  \end{equation}
  And this is straightforward due to the properties of operator $A$.
  Thus we can use Proposition 3.1. in \cite{Maslowski Schmalfus} to obtain that
  the stochastic convolution $z$ is well defined by the variation of constants formula
  and it has a $V$-continuous modification.

  Under assumption (A2):\\
  Sometimes the assumption that $Q$ is nuclear is not convenient.
  In this case we consider the genuine cylindrical fBm
  (i.e. $Q \equiv id_H$).
  And this standard cylindrical fBm can be represented by the formal series
  \begin{equation}
    B^H(t)=\sum^{\infty}_{i=1} \beta_i^H(t) e_i
  \end{equation}
  that does not converge a.s. in $H$.
  Since $\Phi \in \mathcal{L}_2(H)$ and $S$ is an analytic semigroup,
  we can use Proposition 2.6 in \cite{Duncan Maslowski PasikDuncan} to obtain a stronger conclusion.
  That is, let $\alpha < H- \frac{1}{2}$
  then the stochastic convolution $z$ has a $C^{\alpha}([0,T];D(A^{\frac{1}{2}}))$ version.
  In particular, there is a $C([0,T];V)$ version.

  Under assumption (A3):\\
  Sometimes the assumption that $\Phi$ is Hilbert-Schmidt is not necessary.
  We can relax it to the case $\Phi \Phi^* \in \mathcal{L}_1(H)$.
  Since $A^{-2H} \in \mathcal{L}(H)$, we have
  \begin{equation}
          tr(\Phi^* A^{-2H} \Phi)
           = tr( A^{-2H} \Phi\Phi^*)
           \leq |A^{-2H}|_{\mathcal{L}(H)} \cdot |\Phi\Phi^*|_{\mathcal{L}_1(H)}
           < \infty.
  \end{equation}
  Thus, the assumptions in Theorem 1 \cite{Tindel Tudor Viens} are fulfilled and
  we can deduce that $z$ is well-defined and belongs to $L^2(\Omega,H)$.
  For the regularity of $z$, we need to check the condition of Theorem 4 in \cite{Tindel Tudor Viens}.
  Since $H > \frac{1}{2}$, for all $\alpha \in (\frac{1}{2},H)$ we have
  \begin{equation}
          tr(\Phi^* A^{-2(H-\alpha)} \Phi)
           = tr( A^{-2(H-\alpha)} \Phi\Phi^*)
           \leq |A^{-2(H-\alpha)}|_{\mathcal{L}(H)} \cdot |\Phi\Phi^*|_{\mathcal{L}_1(H)}
           < \infty.
  \end{equation}
  Thus, we have $z \in C^{\alpha - \frac{1}{2}}([0,T];D(A^{\frac{1}{2}}))$.
  In particular, $z \in C([0,T];V)$.
\end{proof}

\begin{remark}
  No matter which assumption holds, the key point is the concept of compactness
  which guarantee us to handel the infinite-dimensional problem in a finite-dimensional manner.
  Actually, the nuclear operators (elements of $\mathcal{L}_1(H)$) are compact.
  The Hilbert-Schmidt operators (elements of $\mathcal{L}_2(H)$) are compact too.
\end{remark}

Next we discuss the fundamental example in which
the boundary is square and the parameter $Q \equiv \Phi \equiv id_H$
(!they are not compact). That is
\begin{enumerate}
  \item[(A4)] $ \mathcal{O}  = [-\pi,\pi] \times [-\pi,\pi]  $, \quad
  $Q \equiv \Phi \equiv id_H$.
\end{enumerate}
Before proving the existence and regularity of stochastic convolution $z$,
we state the following lemma in \cite{Li Huang}
(which is based on \cite{Kelliher}) about the spectrum of operator $A$.

\begin{lemma}
  \cite{Li Huang} Under square domain, the eigenvalues of operator A satisfy
\begin{equation}\label{3}
    \lambda_{mn} \geq (m^2 + n^2)^2 , \quad m,n \in \mathbb{N}.
\end{equation}
\end{lemma}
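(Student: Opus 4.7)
My plan is to reduce the spectral problem for $A$ on the square to a scalar biharmonic eigenvalue problem via a stream-function formulation. Since any $u \in H$ is divergence-free on $\mathcal{O}$, I would write $u = \nabla^{\perp}\psi = (\partial_y\psi, -\partial_x\psi)$ for a scalar $\psi$, so that the bilinear form $a(u,v) = \frac{1}{2}(\Delta u, \Delta v)$ becomes, up to this constant, an expression in $\Delta^2 \psi$ tested against $\Delta^2 \varphi$. The no-slip condition $u|_{\partial\mathcal{O}} = 0$ then imposes $\psi$-Dirichlet together with vanishing of the tangential derivative of $\psi$ along each side, while the multipolar stress condition $\tau_{ijk}\eta_j\eta_k = 0$ supplies an additional Navier-type condition on $\psi$ that has to be derived side-by-side on the square.

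Once on the scalar side, I would diagonalize $\Delta^2$ on $[-\pi,\pi]^2$ using an explicit basis adapted to the transformed boundary conditions. The key observation is that, after the multipolar condition is accounted for (it is essentially of Navier type), separation of variables on the square produces an eigenbasis indexed by pairs $(m,n) \in \mathbb{N}^2$ for which the biharmonic eigenvalue is exactly $(m^2+n^2)^2$; this is the content of Kelliher's spectral analysis in \cite{Kelliher} and the calculation carried out in \cite{Li Huang}. Pulling back via $u = \nabla^{\perp}\psi$ and collecting the normalizing constants then yields the eigenvalues $\lambda_{mn}$ of $A$ and the claimed bound. As a softer alternative, should the explicit diagonalization be too delicate, I would invoke the min-max principle: the admissible functions for $A$ form a strictly smaller set (divergence-free, satisfying both the no-slip and multipolar conditions) than those admissible for the unconstrained scalar biharmonic problem, so each indexed Rayleigh quotient can only increase, giving the inequality rather than an exact equality.

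The main obstacle is the honest translation of the condition $\tau_{ijk}\eta_j\eta_k = 0$ into a usable scalar boundary condition on $\psi$, and then verifying that the resulting fourth-order eigenvalue problem on the square really does admit a double indexing by $(m,n)$ compatible with the statement of the lemma. Any factors coming from the $\frac{1}{2}$ in the definition of $a(\cdot,\cdot)$, or from the fact that both components of $u$ contribute to the energy through the two partial derivatives of $\psi$, must be tracked carefully; this bookkeeping is where the sign of the inequality is decided, and it is the step most likely to absorb effort before the lower bound $\lambda_{mn} \geq (m^2+n^2)^2$ drops out.
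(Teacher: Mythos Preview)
The paper does not prove this lemma at all: it is simply quoted from \cite{Li Huang} (said there to rest on \cite{Kelliher}), with no argument supplied in the present text. There is therefore no proof in this paper to compare your proposal against.

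Your strategy---pass to a stream function, reduce the vector problem for $A$ to a scalar fourth-order eigenvalue problem on the square, and then either diagonalize explicitly or invoke the min--max principle to push the eigenvalues above those of a comparison problem---is the natural route and is almost certainly what the cited reference does, given that Kelliher's paper concerns exactly this kind of comparison (Stokes eigenvalues versus the Dirichlet Laplacian), and $A$ here is a biharmonic analogue of the Stokes operator. Your identification of the two genuine difficulties (translating the multipolar boundary condition $\tau_{ijk}\eta_j\eta_k=0$ into a condition on $\psi$, and tracking the constants from the $\tfrac{1}{2}$ in $a(\cdot,\cdot)$ together with the fact that $|u|^2 = |\nabla\psi|^2$ rather than $|\psi|^2$) is accurate. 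One point to watch: on $[-\pi,\pi]^2$ the Dirichlet Laplacian eigenvalues are $(m^2+n^2)/4$, not $m^2+n^2$, so the emergence of $(m^2+n^2)^2$ is not automatic from separation of variables alone; it must come from the interplay of the $\tfrac{1}{2}$, the gradient in the $H$-norm after the stream-function substitution, and the precise comparison problem chosen. Your min--max fallback is the safer path to the bare inequality if the explicit computation becomes unwieldy.
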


Now we can finish the proof of Proposition \ref{Proposition: exis an regu of stoch convol}.
\begin{proof}[proof of Proposition \ref{Proposition: exis an regu of stoch convol} under assumption (A4)]
  The existence part is based on Theorem 1 in \cite{Tindel Tudor Viens}.
  Let us estimate the mean square of $z$.
  \begin{equation}
    \begin{split}
      I_t : & = \mathbb{E} \left| \int_0^t S(t-s)dB^H(s) \right|_H^2
        = \mathbb{E} \left| \sum_{i=1}^{\infty} \int_0^t S(t-s) e_i d\beta^H_i(s) \right|_H^2 \\
       & =  \sum_{i} \int_0^t \int_0^t  <S(t-u) e_i, S(t-v)e_i >_H |u-v|^{2H-2} du dv  \\
       & =  \sum_{i} \int_0^t \int_0^t  e^{-(2t-u-v)\lambda_i} |u-v|^{2H-2} du dv  \\
       & = 2 \sum_{i} \int_0^t \int_0^u  e^{-(2t-u-v)\lambda_i} (u-v)^{2H-2}  dv du
    \end{split}
  \end{equation}
  By the change of variable $x=u-v$ and $y= \lambda_i x$, we get
  \begin{equation}
    \begin{split}
        I_t & = 2\sum_{i} \int_0^t \int_0^u  e^{-(2t-2u+x)\lambda_i} x^{2H-2}  dx du \\
         & = 2\sum_{i} \lambda_i^{1-2H} e^{-2 \lambda_i t}
               \int_0^t e^{2 \lambda_i u} \int_0^{\lambda_i u} y^{2H-2} e^{-y}  dy du \\
         & = 2\sum_{i} \lambda_i^{1-2H} e^{-2 \lambda_i t}
               \int_0^{\lambda_i t} y^{2H-2} e^{-y} \int_{\lambda_i^{-1}y}^{t} e^{2 \lambda_i u} du dy \\
         & =  \sum_{i} \lambda_i^{-2H}
               \int_0^{\lambda_i t} y^{2H-2} e^{-y} (1-e^{-2(\lambda_i t -y)}) dy \\
         & \leq  \sum_{i} \lambda_i^{-2H}
               \int_0^{\infty} y^{2H-2} e^{-y} dy \\
         & \leq  \Gamma(2H-1) \sum_{i,j=1}^{\infty} \frac{1}{(i^2 + j^2)^{4H} } \\
         & = 2 \Gamma(2H-1) \cdot \beta_D(4H) \cdot \xi(4H) < \infty,
    \end{split}
  \end{equation}
  where $\Gamma(s)$ is Gamma function, $\beta_D(s)$ is the Dirichlet beta function and
    $\zeta(s)$ is the Riemann zeta function (for definition see \cite{Borwein Borwein}).
    This yields the existence of $z$.
    Since the proof of regularity is a modification of Lemma 5.13 in \cite{DaPrato Zabczyk} and
Proposition 3.1 in \cite{Maslowski Schmalfus}, we omit it.
\end{proof}

A fundamental concept in the theory of random dynamical system is the notion of
metric dynamical system.
It is a model for a noise which is the source of perturbation of a dynamical system.
We now recall the notions of RDS. For details we refer to \cite{Crauel Flandoli}.
\begin{definition}
  Let $(\Omega, \mathcal{F},P)$ be a probability space.
  A measurable flow $\theta = \{ \theta_t \}_{t \in \real}$ on $\Omega$
  is defined as a mapping
  \begin{equation}
    \theta : \real \times \Omega \rightarrow \Omega
  \end{equation}
  which is $\mathcal{B}(\real) \times \mathcal{F};\mathcal{F}$-measurable and satisfies
  the flow property
  \begin{equation}
    \theta_t \theta_{\tau} = \theta_{t+\tau} \quad \text{for }t,\tau \in \real
  \end{equation}
  and $\theta_0=id_{\Omega}$.
  In addition, we suppose that the measure $P$ is invariant with respect to the flow $\theta$.
  Then the quadruple $(\Omega, \mathcal{F},P,\theta)$ is called a metric dynamical system (MDS).
\end{definition}

Due to the property of stationary increments, we can switch fBm to the equivalent canonical realization.
Let $\Omega = C(\real,V)$ be the set continuous functions defined on $\real$ and
with values $\real$ such that $\omega(0)=0$ for $\omega \in \Omega$.
$\Omega$ being equipped with the compact open topology and
let $\mathcal{F}$ be the associated Borel-$\sigma$-algebra (!incomplete).
The operators $\theta_t$ forming the flow are given by the Wiener shift:
\begin{equation}
    \theta_t \omega (\cdot) = \omega (\cdot +t)-\omega(t), \quad t \in \real.
\end{equation}

\begin{definition}
  Let $E$ be a complete and separable metric space.
  A random dynamical system (RDS) with space $E$ carried by a metric dynamical system
  $(\Omega, \mathcal{F},P,\theta)$ is given by mapping
  \begin{equation}
    \varphi : \real_+ \times \Omega \times E \rightarrow E
  \end{equation}
  which is $\mathcal{B}(\real_+) \times \mathcal{F} \times \mathcal{B}(E);\mathcal{B}(E)$-measurable
  and possess the cocycle property:
  \begin{align}
    \varphi (t+\tau,\omega,x) &= \varphi (\tau,\theta_t \omega, \varphi(t,\omega,x))
     \quad \text{for } t,\tau \in \real_+ \text{ and } x\in E, \omega \in \Omega
     \label{Equation: cocycle property} \\
    \varphi(0,\omega, \cdot) &= id_{\Omega}.
  \end{align}
\end{definition}
In order to generate the RDS from \eqref{Equation: integral form},
we need to check the measurability and cocycle property for all $\omega \in \Omega$.
Since the integrands of Wiener-type stochastic integral \eqref{Equation: integral form}
is deterministic, one can solve the equation in a pathwise way.
Thus, the cocycle property \eqref{Equation: cocycle property} can be satisfied for any noise path
$\omega \in \Omega$ by the uniqueness of the solution map.

\begin{definition}
  \begin{enumerate}
    \item[(i)] A set valued map $K: \Omega \rightarrow 2^E$ taking value in the closed subsets of $E$
        is said to be measurable if for each $x \in E$ the map $\omega \mapsto d(x,K(\omega))$ is measurable,
        where
        \begin{equation}
            d(A,B) = \sup_{x\in A} \inf_{y \in B} d(x,y).
        \end{equation}
         A set valued measurable map $K: \Omega \rightarrow 2^E$ is called a random set.
    \item[(ii)] Let $A$,$B$ be random sets. $A$ is said to attract $B$ if
    \begin{equation}
        d(\varphi(t, \theta_{-t} \omega)B(\theta_{-t} \omega),A(\omega)) \rightarrow 0, \text{ as }
        t \rightarrow \infty \text{ P-a.s.}
    \end{equation}
    $A$ is said to absorb $B$ if P-a.s. there exists an absorption time $t_B(\omega)$
    such that for all $t \geq t_B(\omega)$
    \begin{equation}
        \varphi(t,\theta_{-t}\omega)B(\theta_{-t}\omega) \subset A(\omega).
    \end{equation}
    \item[(iii)] The $\Omega$-limit set of a random set $K$ is defined by
    \begin{equation}
        \Omega_K(\omega)=
          \bigcap_{T\geq 0} \overline{\bigcup_{t\geq T} \varphi(t,\theta_{-t}\omega) A(\theta_{-t} \omega)}.
    \end{equation}
  \end{enumerate}
\end{definition}

\begin{definition}
  A random attractor for an RDS $\varphi$ is a compact random set $A$ satisfying P-a.s.:
  \begin{enumerate}
    \item[(i)] $A$ is invariant, i.e. $\varphi(t,\omega) A(\omega) = A(\theta_t \omega)$ for all $t>0$.
    \item[(ii)] A attracts all deterministic bounded sets $B \subset E$.
  \end{enumerate}
\end{definition}

The following proposition (cf. \cite{Crauel Flandoli} Theorem 3.11) yields a sufficient criterion
for the existence of a random attractor.

\begin{proposition}\label{Proposition: exi rand attr}
  (\cite{Crauel Flandoli}) Let $\varphi$ be an RDS and assume the existence of a compact random set $K$
  absorbing every deterministic bounded set $B \subset E$.
  Then there exists a random attractor $A$, given by
  \begin{equation}
    A(\omega) = \overline{ \bigcup_{B \subset E,\ B \ bounded} \Omega_B(\omega)}
  \end{equation}
\end{proposition}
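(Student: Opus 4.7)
The plan is to realize $A(\omega)$ as the closure of a union of omega-limit sets and to verify the defining properties one at a time. First I would fix an arbitrary deterministic bounded $B \subset E$ and show that $\Omega_B(\omega)$ is a nonempty compact subset of $K(\omega)$. The absorption hypothesis supplies an absorption time $t_B(\omega)$ with $\varphi(t,\theta_{-t}\omega)B \subset K(\omega)$ for all $t \geq t_B(\omega)$, so for $T \geq t_B(\omega)$ the sets $\overline{\bigcup_{t \geq T}\varphi(t,\theta_{-t}\omega)B}$ form a decreasing family of closed subsets of the compact set $K(\omega)$. Their intersection $\Omega_B(\omega)$ is therefore nonempty and compact by the finite intersection property, and contained in $K(\omega)$.

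Next I would verify that $\Omega_B(\omega)$ attracts $B$. If it did not, one would extract sequences $t_n \to \infty$, $x_n \in B$ with $d(\varphi(t_n,\theta_{-t_n}\omega)x_n,\Omega_B(\omega)) \geq \varepsilon$; since the orbits eventually lie in the compact set $K(\omega)$, a convergent subsequence would have a limit point which, by the very definition of $\Omega_B(\omega)$, must lie in $\Omega_B(\omega)$, a contradiction.

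I would then prove strict invariance $\varphi(t,\omega)\Omega_B(\omega) = \Omega_B(\theta_t\omega)$, which is where continuity of $\varphi(t,\omega,\cdot)$ in the state variable is required. The forward inclusion writes $y = \lim_n\varphi(s_n,\theta_{-s_n}\omega)x_n \in \Omega_B(\omega)$, applies the continuous map $\varphi(t,\omega,\cdot)$, and uses the cocycle identity to recognize $\varphi(t,\omega)y$ as the limit of $\varphi(t+s_n,\theta_{-(t+s_n)}\theta_t\omega)x_n$, which lies in $\Omega_B(\theta_t\omega)$. The reverse inclusion is obtained symmetrically: along a sequence $s_n \to \infty$ with $s_n > t$, factor $\varphi(s_n,\theta_{-s_n}\theta_t\omega) = \varphi(t,\omega)\circ \varphi(s_n-t,\theta_{-(s_n-t)}\omega)$ via the cocycle, extract a convergent subsequence of the inner orbits using compactness of $K(\omega)$, and pass to the limit.

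Finally, setting $A(\omega) = \overline{\bigcup_B \Omega_B(\omega)} \subset K(\omega)$, compactness is immediate, attraction of each bounded $B$ follows from $\Omega_B(\omega) \subset A(\omega)$, and invariance of $A$ is inherited from invariance of each $\Omega_B$. The main obstacle will be the measurability of $\omega \mapsto A(\omega)$, since the indexing family of bounded sets is uncountable. I would resolve this by noting the monotonicity $\Omega_B \subset \Omega_{B'}$ whenever $B \subset B'$, so the union may be restricted to the countable cofinal family of closed balls $\{B_n\}$ of integer radius around a fixed point of $E$; then the measurability of each $\omega \mapsto \Omega_{B_n}(\omega)$ follows from the joint measurability of $\varphi$ and of $K$ together with standard results on countable unions, intersections, and closures of random closed sets.
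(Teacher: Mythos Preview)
The paper does not give its own proof of this proposition: it is quoted verbatim from \cite{Crauel Flandoli} (Theorem~3.11) and used as a black box, so there is no in-paper argument to compare against. Your sketch follows the classical Crauel--Flandoli route (nonemptiness and compactness of $\Omega_B(\omega)$ via the finite intersection property inside the compact absorbing set, attraction by a contradiction/compactness argument, invariance via the cocycle identity plus continuity of $\varphi(t,\omega,\cdot)$, and measurability by reducing to a countable cofinal family of balls), which is exactly the proof in the cited reference; the outline is correct and standard.

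One small caveat worth flagging: your invariance step uses continuity of $x \mapsto \varphi(t,\omega,x)$, but the paper's Definition of an RDS only asks for joint measurability of $\varphi$. In practice the cited result in \cite{Crauel Flandoli} is stated for continuous RDS, and the non-Newtonian system in this paper does produce a continuous cocycle, so this is harmless here; just be aware that the hypothesis you invoke is implicit rather than stated in the paper's definition.
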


\section{Solution of non-Newtonian fluids}

We seek the solution of 2D stochastic non-Newtonian fluids by the modified fixed point theorem
in the space $X=C([0,T];H)\cap L^2(0,T;V)$.
First we prove the local existence and uniqueness results.
For $ u \in X$, let
\begin{align}
  J_1(u): &= - \int_0^{\cdot} S(\cdot-s)B(u(s))ds , \\
  J_2(u): &= - \int_0^{\cdot} S(\cdot-s)N(u(s))ds .
\end{align}
We have following estimates.

\begin{lemma}\label{Lemma: estimate J1}
  $J_1 : X \rightarrow X $ and
  for all $ u,v \in X$, we have
    \begin{align}
      |J_1(u)|^2_X & \leq c_1 |u|_X^4 ,      \label{Estimate: J1 1}\\
      \begin{split}
        |J_1(u)-J_1(v)|^2_X & \leq c_2 \left( |u|^2_{C([0,T];H)}\cdot |u|^2_{L^2(0,T);V}
       + |v|_{C([0,T];H)}^2 \cdot |v|^2_{L^2(0,T);V} \right)^{\frac{1}{2}}  \label{Estimate: J1 2}\\
       & \quad \cdot |u-v|^2_X.
      \end{split}
    \end{align}
\end{lemma}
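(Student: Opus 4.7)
The plan is to reduce both inequalities to the classical parabolic energy estimate: if $f \in L^2(0,T;V')$ then the mild solution $y(\cdot) := -\int_0^\cdot S(\cdot-s) f(s) \, ds$ of $\dot y + A y = -f$ with $y(0)=0$ lies in $X = C([0,T];H) \cap L^2(0,T;V)$ and satisfies
\begin{equation*}
  |y|_X^2 \;=\; |y|_{C([0,T];H)}^2 + |y|_{L^2(0,T;V)}^2 \;\leq\; c \int_0^T |f(s)|_{V'}^2 \, ds.
\end{equation*}
This is obtained via Galerkin truncation: pair the projected equation with $y_n$ in $H$, integrate in time, and apply $2|\langle f,y\rangle_{V',V}| \leq |f|_{V'}^2 + |y|_V^2$; then pass to the limit. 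With this in hand, both \eqref{Estimate: J1 1} and \eqref{Estimate: J1 2} reduce to $L^2(0,T;V')$ bounds on $B(u)$ and on $B(u)-B(v)$.

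The required trilinear bound comes from incompressibility and the 2D Ladyzhenskaya inequality $|\phi|_{L^4}^2 \leq c |\phi|_H |\phi|_{H^1}$, combined with the continuous embedding $V \hookrightarrow H^1$. Integrating by parts with $\nabla\cdot\phi=0$ gives $b(\phi,\psi,w) = -b(\phi,w,\psi)$, so by H\"older
\begin{equation*}
  |b(\phi,\psi,w)| \;\leq\; |\phi|_{L^4}\,|\psi|_{L^4}\,|\nabla w|_{L^2} \;\leq\; c\,|\phi|_H^{1/2}|\phi|_V^{1/2}\,|\psi|_H^{1/2}|\psi|_V^{1/2}\,|w|_V.
\end{equation*}
Taking the supremum over $|w|_V \leq 1$ yields the symmetric bound $|B(\phi,\psi)|_{V'} \leq c|\phi|_H^{1/2}|\phi|_V^{1/2}|\psi|_H^{1/2}|\psi|_V^{1/2}$, and setting $\phi=\psi=u$ gives $|B(u)|_{V'} \leq c|u|_H |u|_V$. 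Squaring and integrating in time produces $\int_0^T |B(u)|_{V'}^2 \, ds \leq c |u|_{C([0,T];H)}^2 |u|_{L^2(0,T;V)}^2 \leq c|u|_X^4$, which combined with the energy estimate yields \eqref{Estimate: J1 1}.

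For the Lipschitz bound, I use the bilinearity decomposition $B(u)-B(v) = B(u-v,u) + B(v,u-v)$ and apply the symmetric trilinear estimate to each piece to obtain
\begin{equation*}
  |B(u)-B(v)|_{V'}^2 \;\leq\; c\,|u-v|_H \,|u-v|_V \,\bigl(|u|_H |u|_V + |v|_H |v|_V\bigr).
\end{equation*}
Pulling $|u-v|_H$ and each $|u|_H$, $|v|_H$ out as sup-norms in time, and applying Cauchy--Schwarz to each remaining product of $V$-norms, I obtain
\begin{equation*}
  \int_0^T |B(u)-B(v)|_{V'}^2 \, ds \;\leq\; c\,|u-v|_X^2 \bigl(|u|_{C([0,T];H)} |u|_{L^2(0,T;V)} + |v|_{C([0,T];H)} |v|_{L^2(0,T;V)}\bigr),
\end{equation*}
and the elementary inequality $a+b \leq \sqrt{2(a^2+b^2)}$ converts this to the form on the right-hand side of \eqref{Estimate: J1 2}. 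A final application of the energy estimate to $J_1(u)-J_1(v)$, whose forcing is $-(B(u)-B(v))$, concludes the proof.

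The one delicate point is the choice of the symmetric $\tfrac12$--$\tfrac12$ interpolation in the trilinear bound on $B$: this distribution of exponents is exactly what arranges, after squaring and integrating, for $|u|_V$ to be absorbed into a time $L^2$ integral while $|u|_H$ comes out as a sup-norm, producing precisely the mixed product $|u|_{C([0,T];H)}|u|_{L^2(0,T;V)}$ appearing in \eqref{Estimate: J1 2}. Once the combinatorics of the interpolation is arranged correctly, both estimates are immediate consequences of the parabolic energy inequality.
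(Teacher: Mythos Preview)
Your proposal is correct and follows essentially the same approach as the paper: reduce to the parabolic energy inequality $|J|_X^2 \le C\int_0^T |f|_{V'}^2$ for $\dot J + AJ = -f$, then control $|B(u)|_{V'}$ and $|B(u)-B(v)|_{V'}$ via the bilinear decomposition $B(u)-B(v)=B(u-v,u)+B(v,u-v)$ and the 2D Ladyzhenskaya $\tfrac12$--$\tfrac12$ trilinear bound. The paper performs the energy identity by direct pairing with the weak solution (rather than explicitly invoking Galerkin) and carries out the final time-integration in a slightly more expanded form, but the logical skeleton and the key estimates are the same.
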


\begin{proof}
  According to \cite{Bloom Hao existence} Lemma 2.6, we have
  \begin{equation}
      B(u) \in L^2(0,T;V') \quad \forall u \in X.
  \end{equation}
  Then $J_1(u)$ is a weak solution of the following linear differential equation
  \begin{align}\label{Equation: linear for J1}
    & \frac{dJ(t)}{dt} + AJ(t)+B(u(t))=0, \quad t \in [0,T], \\
    & J(0)=0,
  \end{align}
  and $J_1 \in C([0,T];H)\cap L^2(0,T;V) = X$.
  Thus $J_1$ maps $X$ into $X$.
  Since $J$ is a weak solution of above equation,
  we can take inner product of $J$ in \eqref{Equation: linear for J1} and obtain
  \begin{equation}
    \begin{split}
      \frac{1}{2} \frac{|J(t)|^2}{dt} + |J(t)|^2_V & = -<B(u(t)),J(t)>  \\
      & \leq |B(u(t))|_{V'} \cdot |J(t)|_V \\
      & \leq \frac{1}{2} |B(u(t))|_{V'}^2 + \frac{1}{2} |J(t)|_V^2.
    \end{split}
  \end{equation}
  Integrating with respect to $t$ over $[0,t]$, we get
  \begin{equation}
    |J(t)|^2 + \int_0^t |J(s)|_V^2 ds \leq \int_0^t |B(u(s))|^2_{V'} ds.
  \end{equation}
  Since
  \begin{equation}
    \begin{split}
      \int_0^T |B(u(t))|_{V'}^2 dt
      & \leq c_1 \int_0^T |u(t)|^2 \cdot |u(t)|_V^2 dt \\
      & \leq c_1 \cdot |u|^2_{C([0,T];H)} \cdot \int_0^T |u(t)|_V^2 dt \\
      & \leq \frac{c_1}{2} \left( |u|^4_{C([0,T];H)} + |u|^4_{L^2(0,T;V)} \right) \\
      & \leq \frac{c_1}{2}  |u|_X^4,
    \end{split}
  \end{equation}
  we have
  \begin{equation}
    |J|_X^2 \leq 2 \left( |J|^2_{C([0,T];H)}+|J|^2_{L^2(0,T;V)} \right)
    \leq c_1 |u|_X^4.
  \end{equation}
  Next we prove \eqref{Estimate: J1 2}.
  For $u,v \in X$,
  let $w=J_1(u)-J_1(v)$. Then $w$ is the weak solution of
  \begin{align}
    & \frac{dw(t)}{dt} + Aw(t) + B(u(t))-B(v(t))=0, \\
    & w(0)=0.
  \end{align}
  By the same argument we have
  \begin{equation}
    |w(t)|^2 + \int_0^t|w(s)|_V^2 ds \leq \int_0^t |B(u(s))-B(v(s))|_{V'}^2 ds.
  \end{equation}
  Hereafter we estimate $|B(u)-B(v)|_{V'}$.
  For all $ \phi \in V$,
  \begin{equation}
    \begin{split}
      & \quad \ |<B(u)-B(v), \phi >|   \\
      & = |b(u,u,\phi)-b(v,v,\phi)| \\
      & \leq |b(u-v,y,\phi)|+|b(v,u-v,\phi)| \\
      & \leq C \left( |u-v|^{\frac{1}{2}} |u-v|^{\frac{1}{2}}_V
       \cdot |\phi|_V \cdot
       |u|^{\frac{1}{2}} |u|^{\frac{1}{2}}_V
       + |v|^{\frac{1}{2}} |v|^{\frac{1}{2}}_V
       \cdot |\phi|_V \cdot
       |u-v|^{\frac{1}{2}} |u-v|^{\frac{1}{2}}_V \right)  \\
      & = C \left(  |u|^{\frac{1}{2}} |u|^{\frac{1}{2}}_V
       + |v|^{\frac{1}{2}} |v|^{\frac{1}{2}}_V  \right)
        |u-v|^{\frac{1}{2}} |u-v|^{\frac{1}{2}}_V  |\phi|_V.
    \end{split}
  \end{equation}
  Therefore,
  \begin{equation}
    |B(u)-B(v)|_{V'}
    \leq c_2 \left(  |u|^{\frac{1}{2}} |u|^{\frac{1}{2}}_V
       + |v|^{\frac{1}{2}} |v|^{\frac{1}{2}}_V  \right)
        |u-v|^{\frac{1}{2}} |u-v|^{\frac{1}{2}}_V.
  \end{equation}
  Finally,
  \begin{equation}
    \begin{split}
      & \quad  |J_1(u)-J_1(v)|^2_X \\
      & \leq  2 \sup_{t\in [0,T]}|w(t)|^2 + 2 \int_0^T|w(s)|_V^2 ds \\
      & \leq  2C \int_0^T \left(  |u(s)|^{\frac{1}{2}} |u(s)|^{\frac{1}{2}}_V
        + |v(s)|^{\frac{1}{2}} |v(s)|^{\frac{1}{2}}_V  \right)^2
        |u(s)-v(s)| \cdot |u(s)-v(s)|_V ds \\
      & \leq C \left( \int_0^T \left(  |u(s)|^{\frac{1}{2}} |u(s)|^{\frac{1}{2}}_V
        + |v(s)|^{\frac{1}{2}} |v(s)|^{\frac{1}{2}}_V  \right)^4
        |u(s)-v(s)|^2  ds
        + \int_0^T |u(s)-v(s)|_V^2 ds \right) \\
      & \leq C \left( |u-v|_{C([0,T];H)}^2
      \int_0^T \left(  |u(s)|^{\frac{1}{2}} |u(s)|^{\frac{1}{2}}_V
        + |v(s)|^{\frac{1}{2}} |v(s)|^{\frac{1}{2}}_V  \right)^4
         ds
        + |u-v|_{L^2(0,T;V)}^2 ds \right) \\
      & \leq C \left( 4|u-v|_{C([0,T];H)}^2
      \int_0^T \left(  |u(s)|^2 |u(s)|^2_V
        + |v(s)|^2 |v(s)|^2_V  \right)
         ds
        + |u-v|_{L^2(0,T;V)}^2 ds \right) \\
      & \leq C \left( 4|u-v|_{C([0,T];H)}^2
      \left(  |u|^2_{C([0,T];H)} |u|^2_{L^2(0,T;V)}
        + |v|^2_{C([0,T];H)} |v|^2_{L^2(0,T;V)}  \right)
        + |u-v|_{L^2(0,T;V)}^2 ds \right) \\
      & \leq 2C
      \left(  |u|^2_{C([0,T];H)} |u|^2_{L^2(0,T;V)}
        + |v|^2_{C([0,T];H)} |v|^2_{L^2(0,T;V)}  \right)^{\frac{1}{2}}
        \left( |u-v|_{C([0,T];H)}^2+ |u-v|_{L^2(0,T;V)}^2 ds \right) \\
      & \leq c_2
      \left(  |u|^2_{C([0,T];H)} |u|^2_{L^2(0,T;V)}
        + |v|^2_{C([0,T];H)} |v|^2_{L^2(0,T;V)}  \right)^{\frac{1}{2}}
        |u-v|_X^2.
    \end{split}
  \end{equation}
\end{proof}

\begin{lemma}\label{Lemma: estimate J2}
  $J_2 : X \rightarrow X $ and
  for all $ u,v \in X$, we have
    \begin{align}
      |J_2(u)|^2_X        & \leq  c_3 |u|^2_{L^2(0,T;V)}   ,   \label{Estimate: J2 1}\\
      |J_2(u)-J_2(v)|^2_X & \leq  c_4 T |u-v|^2_X          .   \label{Estimate: J2 2}
    \end{align}
\end{lemma}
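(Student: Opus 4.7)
The plan is to mimic the proof of Lemma~\ref{Lemma: estimate J1} line by line, with the non-Newtonian stress $N$ playing the role of the inertial term $B$. The two structural bounds I would borrow from the properties of $N$ recorded in \cite{Zhao Zhou} are the growth estimate $|N(u)|_{V'} \leq C|u|_V$ and the Lipschitz estimate $|N(u)-N(v)|_{V'} \leq C |u-v|$ in the $H$-norm; the first reflects the uniform boundedness of the viscosity $\mu$, and the second reflects its Lipschitz dependence on the rate-of-deformation tensor. With these two ingredients the rest of the proof is an energy estimate identical in spirit to that of the previous lemma.

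For \eqref{Estimate: J2 1}, I would first note that $N(u) \in L^2(0,T;V')$ whenever $u \in X$, so $J_2(u)$ is the weak solution of the linear parabolic Cauchy problem $\partial_t J + AJ = -N(u)$, $J(0) = 0$, and therefore $J_2(u) \in C([0,T];H) \cap L^2(0,T;V) = X$, giving the map $J_2 : X \to X$. Testing the equation against $J_2(u)$ in the $V$--$V'$ duality and applying Young's inequality gives $\frac{d}{dt}|J_2(t)|^2 + |J_2(t)|_V^2 \leq |N(u(t))|_{V'}^2 \leq C|u(t)|_V^2$; integrating over $[0,T]$ and combining the supremum bound on $|J_2(t)|^2$ with the $L^2_t V$-bound on $J_2$ delivers \eqref{Estimate: J2 1}.

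For \eqref{Estimate: J2 2}, setting $w = J_2(u) - J_2(v)$ produces the analogous linear equation with forcing $-(N(u)-N(v))$, and the same energy estimate yields $|w|_X^2 \leq C \int_0^T |N(u(s))-N(v(s))|_{V'}^2\,ds$. The $H$-norm Lipschitz bound then converts the right-hand side into $C \int_0^T |u(s)-v(s)|^2\,ds \leq CT |u-v|_{C([0,T];H)}^2 \leq CT |u-v|_X^2$, which is exactly the claimed form. The crucial factor $T$ emerges precisely from the $L^2_t$-to-sup passage, which is possible only because the Lipschitz estimate on $N$ is in the $H$-norm rather than the $V$-norm.

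The main obstacle is verifying the Lipschitz bound $|N(u)-N(v)|_{V'} \leq C |u-v|$ with a constant $C$ independent of $u$ and $v$. Since $N$ depends on the first-derivative quantity $e(u)$, this is not a pointwise Lipschitz estimate in $u$; one must decompose $N(u)-N(v) = \mu(u) e(u-v) + (\mu(u)-\mu(v)) e(v)$ and exploit both the uniform boundedness of $\mu$ and its Lipschitz dependence on $e$, possibly together with an integration by parts that trades derivatives for $L^2$ regularity on the test function. I would defer this step to the corresponding estimate in \cite{Zhao Zhou} rather than re-derive it, since the argument is standard in the non-Newtonian fluid literature but lengthy.
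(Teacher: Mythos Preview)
Your overall architecture matches the paper exactly: $J_2(u)$ solves the linear problem with forcing $-N(u)$, an energy estimate gives $|J_2(u)|_X^2 \le C\int_0^T |N(u)|_{V'}^2$, and the growth bound $|N(u)|_{V'}\le C|u|_V$ yields \eqref{Estimate: J2 1}. That part is fine.

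The gap is in the Lipschitz step. The bound you want to cite, $|N(u)-N(v)|_{V'}\le C|u-v|_H$ with $C$ independent of $u,v$, is \emph{not} what \cite{Zhao Zhou} gives and is in fact too strong. Since $N$ acts through $e(u)=\nabla_{\mathrm{sym}}u$, any pointwise Lipschitz argument produces $|F(e(u))-F(e(v))|\le C|e(u-v)|$ and hence only $|N(u)-N(v)|_{V'}\le C|u-v|_{H^1_0}$; your suggested integration by parts to shift a derivative onto the test function would also differentiate the variable coefficient $\mu(u)$, bringing in second derivatives of $u$ and $v$ that are not controlled uniformly over $X$. So an $H$-norm Lipschitz constant independent of $u,v$ is not available.

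What the paper actually does (following \cite{Zhao Zhou}) is write $F(s)=2\mu_0(\epsilon+|s|^2)^{-\alpha/2}s$, check by hand that $|DF|\le C(\mu_0,\epsilon,\alpha)$, and conclude
\[
|\langle N(u)-N(v),\phi\rangle|\le C|e(u-v)|\,|\nabla\phi|\le C|u-v|_{H^1_0}|\phi|_{H^1_0}.
\]
Then Sobolev interpolation $|u-v|_{H^1_0}\le C|u-v|^{1/2}|u-v|_V^{1/2}$ gives
\[
|J_2(u)-J_2(v)|_X^2\le C\int_0^T |u-v|\,|u-v|_V\,ds\le C\,|u-v|_{C([0,T];H)}\,|u-v|_{L^1(0,T;V)},
\]
and the factor of $T$ appears from H\"older in time when passing from $L^1(0,T;V)$ to $L^2(0,T;V)$, not from an $L^2_t$-to-sup conversion on a pure $H$-norm. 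So the mechanism producing the small factor is different from the one you described, and it rests on the interpolated bound rather than on an $H$-Lipschitz estimate that you would not be able to verify.
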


\begin{proof}
  For all $ u \in X$,
  we have $N(u)\in L^2(0,T;V)$ (see \cite{Bloom Hao existence} Lemma 2.6).
  As the proof of Lemma \ref{Lemma: estimate J1},
  we can prove that $J_2$ maps $X$ into $X$,
  $J_2(u)$ is the weak solution of
  \begin{align}\label{Equation: linear for J2}
    & \frac{dJ(t)}{dt} + AJ(t)+N(u(t))=0, \quad t \in [0,T], \\
    & J(0)=0
  \end{align}
  and the following estimate holds:
  \begin{equation}
    |J(t)|^2 + \int_0^t |J(s)|_V^2 ds \leq \int_0^t |N(u(s))|^2_{V'} ds .
  \end{equation}
  By \cite{Bloom Hao existence} Lemma 2.6, we have
  \begin{equation}
    |N(u)|_{V'} \leq C |u|_V.
  \end{equation}
  Therefore,
  \begin{equation}
      |J_2(u)|^2_X
      \leq 2 \int_0^t |N(u(s))|^2_{V'} ds
      \leq 2 C \int_0^t |u(s)|^2_{V} ds
      \leq c_3 |u|^2_{L^2(0,T;V)} .
  \end{equation}
  Next we prove \eqref{Estimate: J2 2}.
  For $ u,v \in X$,
  let $w=J_2(u)-J_2(v)$. Then $w$ is the weak solution of
  \begin{align}
    & \frac{dw(t)}{dt} + Aw(t) + N(u(t))-N(v(t))=0, \\
    & w(0)=0.
  \end{align}
  Similar to Lemma \ref{Lemma: estimate J1}, we have
  \begin{equation}
    |w(t)|^2 + \int_0^t|w(s)|_V^2 ds \leq \int_0^t |N(u(s))-N(v(s))|_{V'}^2 ds.
  \end{equation}
  For all $ \alpha \in (0,1)$ and $\phi \in V$, we have
  \begin{equation}
    \begin{split}
      & \quad \ |<N(u)-N(v),\phi>| \\
      & = 2|\int_{\mathcal{O}} \left( \mu (u) e_{ij}(u)-\mu (v) e_{ij}(v) \right)
       e_{ij}(\phi) dx|
    \end{split}
  \end{equation}
  Inspired by the technique in \cite{Zhao Zhou} Lemma 3.1, we set
  \begin{equation}
    F(s)=2 \mu_0 (\epsilon + |s|^2)^{-\alpha /2}s,
  \end{equation}
  where
  \begin{equation}
    s=
    \begin{pmatrix}
      s_1 & s_2 \\
      s_3 & s_4
    \end{pmatrix}
    \in \mathbb{R}^4,
    \quad |s|^2=\sum_{i=1}^4 s_i^2, \quad s_i\in \mathbb{R},i=1,2,3,4.
  \end{equation}
  Then the first order Fr\'echet derivative of $F(s)$ is
  \begin{equation}
    DF(s)=2\mu_0(\epsilon +|s|^2)^{-\alpha /2}
    \begin{pmatrix}
      1-\frac{\alpha s_1^2}{\epsilon + |s|^2}  &  -\frac{\alpha s_1 s_2}{\epsilon + |s|^2} &
      -\frac{\alpha s_1 s_3}{\epsilon + |s|^2} & -\frac{\alpha s_1 s_4}{\epsilon + |s|^2} \\
      -\frac{\alpha s_1 s_2}{\epsilon + |s|^2} &  1-\frac{\alpha s_2^2}{\epsilon + |s|^2} &
      -\frac{\alpha s_2 s_3}{\epsilon + |s|^2} & -\frac{\alpha s_2 s_4}{\epsilon + |s|^2} \\
      -\frac{\alpha s_1 s_3}{\epsilon + |s|^2} & -\frac{\alpha s_2 s_3}{\epsilon + |s|^2} &
      1-\frac{\alpha s_3^2}{\epsilon + |s|^2}  & -\frac{\alpha s_3 s_4}{\epsilon + |s|^2} \\
      -\frac{\alpha s_1 s_4}{\epsilon + |s|^2} & -\frac{\alpha s_2 s_4}{\epsilon + |s|^2} &
      -\frac{\alpha s_3 s_4}{\epsilon + |s|^2} &  1-\frac{\alpha s_4^2}{\epsilon + |s|^2}  .
    \end{pmatrix}
  \end{equation}
  Since $0<\alpha<1$, we have
  \begin{equation}
    \left| -\frac{\alpha s_i s_j}{\epsilon + |s|^2} \right|
    < \left| -\frac{s_i s_j}{\epsilon + |s|^2} \right|
    < \frac{1}{\epsilon}, \quad i,j=1,2,3,4,
  \end{equation}
  and
  \begin{equation}
    0<1- \frac{\alpha s_i^2}{\epsilon +|s|^2} <1, \quad i=1,2,3,4.
  \end{equation}
  Consequently,
  \begin{equation}
    |DF(s)| \leq 2\mu_0(\epsilon +|s|^2)^{-\alpha /2} \sqrt{4+\frac{12}{\epsilon^2}},
    \quad \forall s \in \mathbb{R}^4.
  \end{equation}
  Similarly, the second order Fr\'echet derivative of $F(s)$ is a three-dimensional matrix
  \begin{equation}
    D^2 F(s)= \left( \frac{\partial^2 F_i (s)}{\partial s_j \partial s_k} \right),
    \quad i,j,k=1,2,3,4
  \end{equation}
  where $F_i(s)=2\mu_0 (\epsilon +|s|^2)^{-\alpha /2} s_i$.
  By some computation we see that
  \begin{equation}
    |DF(s)| + |D^2 F(s)| \leq C (\mu_0, \epsilon, \alpha)
    \quad \forall s_i \in \mathbb{R}, i=1,2,3,4
  \end{equation}
  where $c_4$ is a positive constant depending on $\mu_0,\epsilon$ and $\alpha$.
  For all $a,b \in \mathbb{R}^4$,
  \begin{equation}
    F(b)-F(a)= \int_0^1 DF \left( a+\tau(b-a) \right) (b-a) d\tau.
  \end{equation}
  Taking $a=e(u)=(e_{ij}(u)),b=e(v)=(e_{ij}(v))$,
  applying the integration by parts and the above inequality about $F(s)$, we have
  \begin{equation}
    \begin{split}
     <N(u)-N(v),\phi>
     & = \left( F(e(u))-F(e(v)), \nabla \phi  \right)  \\
     & = \int_{\mathcal{O}} |DF(\xi)| \cdot |e(u)-e(v)|\cdot |\nabla \phi| dx \\
     & \leq 2 \mu_0 \epsilon^{-\frac{\alpha}{2}} \sqrt{4+\frac{12}{\epsilon^2}}
       \int_{\mathcal{O}} |e(u-v)| \cdot |\nabla \phi| dx \\
     & \leq C |e(u-v)| \cdot |\nabla \phi| \\
     & \leq C |u-v|_{H_0^1} \cdot |\phi|_{H_0^1} \\
     & \leq C |u-v|^{1/2} \cdot |u-v|_{H^2}^{1/2} \cdot|\phi|_{H_0^1}
       \quad \text{(by the Sobolev interpolation theorem)} .\\
    \end{split}
  \end{equation}
  Finally,
  \begin{equation}
    \begin{split}
     |J_2(u)-J_2(v)|^2_X
    & \leq 2 \int_0^T |N(u(s))-N(v(s))|_{V'}^2 ds \\
    & \leq C  \int_0^T  |u(s)-v(s)| \cdot |u(s)-v(s)|_{H^2} ds \\
    & \leq C \cdot |u-v|_{C([0,T];H)} \cdot |u-v|_{L^1(0,T;V)} \\
    & \leq C \cdot T \cdot |u-v|_{C([0,T];H)} \cdot |u-v|_{L^2(0,T;V)} \\
    & \leq c_4 T |u-v|_X^2  .   \\
    \end{split}
  \end{equation}
\end{proof}

We apply the following version of the contraction mapping theorem.

\begin{lemma}\label{Lemma: fix point}
  (\cite{DaPrato Zabczyk} Lemma 15.2.6)
  Let $F$ be a transformation from a Banach space $E$ into $E$,
  $\phi \in E$ and $M>0$ a positive number.
  If $F(0)=0$, $| \phi |_E \leq \frac{1}{2}M$ and
  \begin{equation}
    |F(u)-F(v)|_E \leq \frac{1}{2} |u-v|_E \quad \forall u,v \in B_E(M),
  \end{equation}
  then the equation
  \begin{equation}
    u = \phi +F(u)
  \end{equation}
  has a unique solution $u \in E$ satisfying $u \in B_E(M)$.
\end{lemma}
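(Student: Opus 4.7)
The plan is to reduce the statement to the Banach fixed point theorem applied on the closed ball $B_E(M)$, viewed as a complete metric space under the norm-induced metric. Define the map $T : B_E(M) \to E$ by $T(u) = \phi + F(u)$. Any solution of $u = \phi + F(u)$ lying in $B_E(M)$ is exactly a fixed point of $T$ restricted to $B_E(M)$, and conversely, so the problem is to produce such a fixed point and establish its uniqueness in the ball.

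First I would check self-mapping: for $u \in B_E(M)$, using $F(0)=0$ together with the hypothesis that $F$ is a $\frac12$-contraction on $B_E(M)$,
\begin{equation}
    |T(u)|_E \;\leq\; |\phi|_E + |F(u)-F(0)|_E \;\leq\; \tfrac{1}{2}M + \tfrac{1}{2}|u|_E \;\leq\; \tfrac{1}{2}M + \tfrac{1}{2}M = M,
\end{equation}
so $T(B_E(M)) \subset B_E(M)$. Next, contractivity transfers from $F$ directly, since $\phi$ cancels in the difference:
\begin{equation}
    |T(u) - T(v)|_E = |F(u) - F(v)|_E \leq \tfrac{1}{2}|u-v|_E \quad \forall u,v \in B_E(M).
\end{equation}

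The closed ball $B_E(M)$ in a Banach space is complete under the induced metric, hence the Banach contraction principle applies and yields a unique fixed point $u^\ast \in B_E(M)$ of $T$, which is the desired solution of $u = \phi + F(u)$. Uniqueness among all solutions lying in $B_E(M)$ follows from the contraction estimate: if $u_1, u_2 \in B_E(M)$ both satisfy the equation, then $|u_1 - u_2|_E = |F(u_1)-F(u_2)|_E \leq \tfrac12 |u_1-u_2|_E$, forcing $u_1 = u_2$.

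There is no real obstacle here; the lemma is a textbook variant of the Banach fixed point theorem, and the only item worth being careful about is that the hypothesis $|\phi|_E \leq M/2$ is used precisely to guarantee self-mapping of the ball, while $F(0)=0$ is what allows bounding $|F(u)|_E$ by $|u|_E/2$ via the contraction inequality evaluated at $(u,0)$. The conclusion yields uniqueness only within $B_E(M)$, not globally, which is in accordance with the statement.
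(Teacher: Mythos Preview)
Your proof is correct and is exactly the standard argument: the hypotheses force $T(u)=\phi+F(u)$ to be a $\tfrac12$-contraction mapping $B_E(M)$ into itself, and Banach's fixed point theorem on the complete metric space $B_E(M)$ finishes it. Note that the paper does not actually supply a proof of this lemma; it simply quotes it from \cite{DaPrato Zabczyk} (Lemma~15.2.6), so there is nothing in the paper to compare against beyond the statement itself.
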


We now prove the main result of this paper.

\begin{theorem}\label{Theorem: existence of solu}
  (local existence and uniqueness of solution) For all $u_0 \in H$, there exists $T_0>0$ s.t.
  equation \eqref{Problem: differential form} admits a unique solution
  $u \in C([0,T_0];H) \cap L^2(0,T_0;V)$ for all $\omega \in \Omega$
  in the sense of \eqref{Equation: integral form}.
\end{theorem}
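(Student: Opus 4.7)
The plan is to cast the mild equation \eqref{Equation: integral form} as a fixed-point problem $u = \phi + F(u)$ in the intersection space $X = C([0,T_0];H) \cap L^2(0,T_0;V)$, where
\[
\phi(t) := S(t)u_0 + z(t), \qquad F(u) := J_1(u) + J_2(u),
\]
and $z(t) = \int_0^t S(t-s)\Phi\,dB^H(s)$ is the stochastic convolution. Since $\Phi$ is deterministic, Proposition \ref{Proposition: exis an regu of stoch convol} yields $z(\cdot,\omega) \in C([0,T];V) \subset X$ pathwise, so the entire argument below is deterministic once $\omega$ is fixed, and the resulting solution is automatically defined for every $\omega \in \Omega$ as the statement requires.

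First I would check that $\phi \in X$. The self-adjoint positivity of $A$ gives $\sup_t|S(t)u_0|_H \leq |u_0|_H$ and, from the spectral decomposition, $\int_0^T |A^{1/2}S(t)u_0|^2\,dt \leq \tfrac{1}{2}|u_0|_H^2$. Two asymptotic facts will be critical for the calibration: as $T \to 0$, $|S(\cdot)u_0|_{L^2(0,T;V)} \to 0$ by dominated convergence on the spectral sum and $|z|_{C([0,T];V)} \to 0$ by continuity of $z$ with $z(0)=0$, whereas $|S(\cdot)u_0|_{C([0,T];H)}$ remains bounded by $|u_0|_H$ and does not vanish.

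Next I would invoke Lemma \ref{Lemma: fix point}. The condition $F(0)=0$ is immediate from $B(0)=N(0)=0$, and summing the Lipschitz bounds of Lemmas \ref{Lemma: estimate J1}--\ref{Lemma: estimate J2},
\[
|F(u)-F(v)|_X \leq C\bigl(|u|_{C([0,T];H)}|u|_{L^2(0,T;V)}+|v|_{C([0,T];H)}|v|_{L^2(0,T;V)}\bigr)^{1/2}|u-v|_X + \sqrt{c_4 T}\,|u-v|_X,
\]
whose crucial feature is that the $J_1$-coefficient involves the \emph{product} of the two component norms, not $|u|_X^2$. I would therefore not use the plain norm on $X$ but the rescaled norm
\[
\|u\| := \max\!\bigl(|u|_{C([0,T_0];H)}/M_1,\ |u|_{L^2(0,T_0;V)}/M_2\bigr),
\]
whose unit ball is the two-parameter rectangle $\{|u|_{C(H)}\leq M_1,\ |u|_{L^2(V)}\leq M_2\}$. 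Fix $M_1 := 2|u_0|_H+1$ (dictated by the non-vanishing component of $\phi$), then choose $M_2$ so that the converted Lipschitz constant of $F$ on the unit ball is at most $1/4$, and finally take $T_0$ small enough that the $\sqrt{c_4 T}$ term contributes another $1/4$ and that $|\phi|_{C([0,T_0];H)}\leq M_1/2$ and $|\phi|_{L^2(0,T_0;V)}\leq M_2/2$, i.e.\ $\|\phi\| \leq 1/2$; the two asymptotic facts above are exactly what make this simultaneously feasible. Lemma \ref{Lemma: fix point} then produces a unique fixed point in the unit ball, and a separate Gronwall argument on the difference of two putative solutions, re-using the same Lipschitz estimates, upgrades uniqueness from the ball to all of $X$.

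The main obstacle is this final calibration. Because the $J_1$-estimate is genuinely quadratic in the solution norm, a single-parameter contraction ball in $X$ cannot absorb arbitrarily large initial data $|u_0|_H$; the asymmetric two-parameter rescaling is what forces the product $|u|_{C(H)}|u|_{L^2(V)}$ to be small enough for the contraction even when $|u|_{C(H)}$ is on the order of $|u_0|_H$. This is the \emph{careful estimation on the selected intersection space} flagged in the introduction, and it is where the bulk of the technical work resides.
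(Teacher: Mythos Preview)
Your diagnosis of the difficulty is sound, but the anisotropic rescaling does not yield a contraction. The obstruction sits in the $L^2(0,T_0;V)$-component of the output: on your rectangle $\{|u|_{C(H)}\le M_1,\ |u|_{L^2(V)}\le M_2\}$, Lemma~\ref{Lemma: estimate J1} gives $|J_1(u)-J_1(v)|_{L^2(V)}\le C(M_1M_2)^{1/2}|u-v|_X$, and since a difference $u-v$ concentrated in the $C(H)$-direction can have $|u-v|_X$ of order $M_1\|u-v\|$, after dividing by $M_2$ the rescaled Lipschitz constant is of order $M_1^{3/2}M_2^{-1/2}$, which diverges as $M_2\to 0$. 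Equivalently, $|J_1(u)|_{L^2(V)}\le C\,|u|_{C(H)}\,|u|_{L^2(V)}\le C M_1 M_2$ can never be made $\le M_2/2$ when $M_1\sim|u_0|$ is large, so $\phi+F$ does not even carry the rectangle into itself. The product structure you noticed buys smallness only when \emph{both} factors can be taken small, and here the $C(H)$-factor is pinned by the data.

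For comparison, the paper does not rescale: it works in the isotropic ball $B_{X_{T_0}}(M)$ with $M=2(2|u_0|+|z|_X)$ and claims that ``by absolute continuity of the Bochner integral'' one may shrink $T_0$ so that $(|u|_{L^2(0,T_0;V)}^2+|v|_{L^2(0,T_0;V)}^2)^{1/4}$ is small uniformly over the ball---which, as you implicitly recognize, is not justified as stated, since an element of $B_{X_{T_0}}(M)$ may have $|u|_{L^2(0,T_0;V)}=M$ however small $T_0$ is. A clean way around both difficulties is to shift variables: set $w=u-\phi$ and run the contraction on $w\mapsto F(w+\phi)$ in a small \emph{isotropic} ball $B_{X_{T_0}}(r)$. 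The large quantity $|u_0|$ now lives only inside the fixed function $\phi$, and because $|\phi|_{L^2(0,T_0;V)}\to 0$ as $T_0\to 0$, the $J_1$-Lipschitz coefficient is bounded by $C(r+|u_0|)^{1/2}(r+|\phi|_{L^2(0,T_0;V)})^{1/2}$, genuinely small for small $r$ and $T_0$; the self-mapping condition then follows from $|F(\phi)|_X\to 0$.
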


\begin{proof}
  Fix $\omega \in \Omega$. Let
  \begin{equation}
    \phi (t)= S(t)u_0 + z(t).
  \end{equation}
  By the property of analytic semigroup and
  Lemma \ref{Proposition: exis an regu of stoch convol},
  $S(\cdot)u_0,z \in C([0,T];V) \subset X$.
  Then we have
  \begin{equation}
    |\phi |_X \leq |S(\cdot)u_0|_X+|z|_X \leq 2|u_0|+|z|_X.
  \end{equation}
  Let $M(\omega)=2(2|u_0|+|z(\omega)|_X)$.
  Construct the mapping $\mathcal{F}=J_1+J_2$,
  then for all $u,v \in X$, we have
  \begin{equation}
   \begin{split}
    & \quad |\mathcal{F}(u)-\mathcal{F}(v)|_X  \\
    & \leq |J_1(u)-J_1(v)|_X + |J_2(u)-J_2(v)|_X \\
    & \leq  c_2^{\frac{1}{2}}  \left( |u|^2_{C([0,T];H)}\cdot |u|^2_{L^2(0,T);V}
       + |v|_{C([0,T];H)}^2 \cdot |v|^2_{L^2(0,T);V} \right)^{\frac{1}{4}}
      \cdot |u-v|_X  \\
    & \quad +  (c_4  T)^{\frac{1}{2}} |u-v|_X \qquad
       \text {(by Lemma \ref{Lemma: estimate J1} and \ref{Lemma: estimate J2})}   \\
    & \leq  (c_2  M)^{\frac{1}{2}}  \left( |u|^2_{L^2(0,T);V}
       +   |v|^2_{L^2(0,T);V} \right)^{\frac{1}{4}}  |u-v|_X
       +   (c_4 T)^{\frac{1}{2}} |u-v|_X .
   \end{split}
  \end{equation}
  Due to the absolute continuity property of Bochner integral,
  we can choose $ \tau \in (0,1]$ s.t.
  \begin{equation}
    \left(|u|^2_{L^2(0,\tau);V} + |v|^2_{L^2(0,\tau);V} \right)^{\frac{1}{4}}
    \leq (2Mc_2)^{-\frac{1}{2}}.
  \end{equation}
  Let $T_0 = \min \{\tau, 1, \frac{1}{16c_4}  \}$ and $X_{T_0}:=C([0,T_0];H) \cap L^2(0,T_0;V)$.
  We have
  \begin{equation}
     |\mathcal{F}(u)-\mathcal{F}(v)|_{X_{T_0}}
     \leq (\frac{1}{4}+\frac{1}{4})|u-v|_{X_{T_0}} = \frac{1}{2}|u-v|_{X_{T_0}}.
  \end{equation}
  Applying the modified fixed point lemma \ref{Lemma: fix point},
  equation
  \begin{equation}
    u= \phi + \mathcal{F}(u) \equiv S(\cdot)u_0 + z + J_1(u)+J_2(u)
  \end{equation}
  has a unique solution $u$  in $C([0,T_0];H) \cap L^2(0,T_0;V)$
  and the solution satisfies $|u|_{X_{T_0}} \leq M$.
\end{proof}

In the rest part of this section we obtain a priori estimates and global existence.
Denote by $u$, the local solution of \eqref{Equation: integral form} over $[0,T_0]$.
Let $v(t)=u(t)-z(t)$.
Then $v(t)$ is the mild solution of equation
\begin{equation}\label{Equation: integral form for v}
    v(t)=S(t)u_0 - \int_0^t S(t-s)B(v(s)+z(s))ds - \int_0^t S(t-s)N(v(s)+z(s))ds.
\end{equation}
Therefore, $v(t)$ is the weak solution of the following differential equation with random parameters:
\begin{align}\label{Equation: differential form for v}
  & \frac{v(t)}{dt}+Av(t)+B(v(t)+z(t))+N(v(t)+z(t))=0, \\
  & v(0)=u_0.
\end{align}
Inspired by \cite{DaPrato Zabczyk} Chapter 15.3,
we give a priori estimate which ensures the global existence of solution.

\begin{proposition}\label{Proposition: Extention of solu}
  Assume that $v$ is the solution of \eqref{Equation: integral form for v} on the interval.
  Then we have
\begin{align}
  \begin{split}\label{Equation: estimate v in CH}
    \sup_{t \in [0,T]} |v(t)|^2
    \leq e^{c_5 \int_0^T |z(s)|_{H_0^1}^2 ds} |u_0|^2
    + \int_0^T e^{c_5 \int_s^T |z(r)|_{H_0^1}^2 dr} g_1(s) ds,
  \end{split} \\
  \begin{split}\label{Equation: estimate v in LV}
    \int_0^T |v(t)|_V^2 dt
    \leq c_6 |u_0|^2
    + c_5 c_6  \sup_{t \in [0,T]} |v(t)|^2 \int_0^T |z(s)|_{H_0^1}^2 ds
    + c_6 \int_0^T g_1(s) ds
  \end{split}
\end{align}
  where $c_5$ and $c_6$ are positive constants depending on $\lambda_1$ and $\mathcal{O}$,
  $g_1$ is an integrable function depending on $z$.
\end{proposition}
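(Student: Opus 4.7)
The plan is to carry out an energy estimate directly on the deterministic (but pathwise random-parameter) equation \eqref{Equation: differential form for v} for $v$. First I would take the $H$-inner product with $v(t)$ to obtain
\begin{equation*}
\tfrac{1}{2}\tfrac{d}{dt}|v|^2 + \langle Av, v\rangle + \langle B(v+z), v\rangle + \langle N(v+z), v\rangle = 0.
\end{equation*}
The coercivity of the bilinear form $a$ gives $\langle Av, v\rangle \geq \gamma |v|_V^2$ for some $\gamma>0$, which supplies the dissipation needed to absorb the cross terms produced by the nonlinearities.

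The heart of the argument is the control of the two nonlinear brackets. For the convection term, I would expand $\langle B(v+z), v\rangle = b(v+z, v+z, v)$ and use that $v+z$ is divergence-free so $b(v+z, v, v) = 0$; this reduces the bracket to $b(v, z, v) + b(z, z, v)$. The cross term $b(v, z, v)$ I would bound via the 2D Ladyzhenskaya inequality $|v|_{L^4}^2 \leq C|v||\nabla v|$ together with $|\nabla v| \leq C|v|_V$, giving $C|v||v|_V|z|_{H_0^1}$, which Young's inequality splits as $\epsilon|v|_V^2 + C_\epsilon|v|^2|z|_{H_0^1}^2$. The purely $z$-driven piece $b(z, z, v)$ is handled by the bound $|B(z)|_{V'} \leq C|z||z|_V$ that already appeared in the proof of Lemma \ref{Lemma: estimate J1}, producing $\epsilon|v|_V^2 + C|z|^2|z|_V^2$, which will be packaged into $g_1$.

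For the viscous term I would exploit the identity $\langle N(v+z), v\rangle = \langle N(v+z), v+z\rangle - \langle N(v+z), z\rangle$. Since $\langle N(u), u\rangle = \int \mu(u)|e(u)|^2\,dx \geq 0$, the first piece can be dropped from the upper bound for $-\langle N(v+z),v\rangle$, while the remainder is controlled by $|N(v+z)|_{V'}|z|_V \leq C|v+z|_V|z|_V$ (using the Bloom--Hao estimate $|N(u)|_{V'} \leq C|u|_V$ invoked in Lemma \ref{Lemma: estimate J2}), yielding $\epsilon|v|_V^2 + C|z|_V^2$. Combining everything and choosing $\epsilon$ small enough to absorb a fixed fraction of $\gamma|v|_V^2$ into the left-hand side leaves a differential inequality of the form
\begin{equation*}
\tfrac{d}{dt}|v|^2 + \tfrac{\gamma}{2}|v|_V^2 \leq c_5|z|_{H_0^1}^2\,|v|^2 + g_1(t),
\end{equation*}
with $g_1(t) = C(|z(t)|^2|z(t)|_V^2 + |z(t)|_V^2)$, which is integrable on $[0,T]$ by Proposition \ref{Proposition: exis an regu of stoch convol}.

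From this master inequality the two claimed bounds follow almost mechanically. Dropping the $|v|_V^2$ term and applying the standard (integral form) Gronwall lemma produces \eqref{Equation: estimate v in CH}. Integrating the full inequality over $[0,T]$, discarding the nonnegative $|v(T)|^2$, and then substituting the supremum bound just obtained on the right-hand side gives \eqref{Equation: estimate v in LV} with $c_6$ depending on $\gamma$ and $\lambda_1$ (the latter entering through the Poincaré-type equivalence of $a(\cdot,\cdot)$ and $|\cdot|_V^2$). The main obstacle is the convection cross term: one must estimate $b(v,z,v)$ sharply enough that $|z|_{H_0^1}^2$ (rather than an awkward power like $|z|_{H_0^1}^{4/3}$) appears as the Gronwall coefficient. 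Using $|\nabla v| \leq C|v|_V$ directly, rather than the sharper interpolation $|\nabla v| \leq C|v|^{1/2}|v|_V^{1/2}$, is the key trick that yields the stated form of the estimate.
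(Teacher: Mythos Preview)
Your proposal is correct and follows essentially the same route as the paper: take the $H$-inner product of \eqref{Equation: differential form for v} with $v$, reduce the convection term via the orthogonality $b(v+z,v,v)=0$, drop the nonnegative part of the $N$-term, absorb the remaining cross terms with Young's inequality, and then apply Gronwall for \eqref{Equation: estimate v in CH} and integrate the differential inequality for \eqref{Equation: estimate v in LV}. The only cosmetic differences are that the paper rewrites the trilinear term as $b(v+z,z,v+z)$ (a second application of orthogonality) rather than splitting it as $b(v,z,v)+b(z,z,v)$, and it bounds the residual $N$-contribution directly by $|z|_{H_0^1}|v|_{H_0^1}$ instead of via $|N(v+z)|_{V'}\le C|v+z|_V$, so its $g_1$ involves only $|z|_{H_0^1}$ rather than $|z|_V$; both choices yield the stated inequalities.
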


\begin{proof}
Multiple \eqref{Equation: differential form for v} by $v(t)$
and then integrate over $\mathcal{O}$. We have
\begin{equation}
  \begin{split}
    \frac{1}{2} \frac{d|v(t)|^2}{dt} +|v(t)|^2_V
    & = - b\left(v(t)+z(t),v(t)+z(t),v(t)\right)  - <N(v(t)+z(t)),v(t)> \\
    &\leq |b \left( v(t)+z(t),z(t),v(t)+z(t)\right)| - <N(z(t)),v(t)>.
  \end{split}
\end{equation}
The above inequality take advantage of $<N(v),v> \ \geq 0$ (see \cite{Bloom Hao existence})
and orthogonality property of $b$ (see \cite{RogerTemam} (2.21)).
In the sequel we omit the time variable $t$.
Firstly we estimate trilinear form $b$.
\begin{equation}
    \begin{split}
      & \quad b\left(v+z,z,v+z\right) \\
      & \leq C_1 |v+z| \cdot
        |z|_{H_0^1} \cdot
        |v+z|_{H_0^1}  \\
      & \leq \frac{C_1}{2C_2} |z|^2_{H_0^1} \cdot |v+z|^2
        + \frac{C_1 C_2}{2} |v+z|^2_{H_0^1} \\
      & \leq \frac{C_1}{C_2} |z|_{H_0^1}^2 |v|^2
        + C_1 C_2 |v|_{H_0^1}^2
        + \frac{C_1}{C_2} |z|^2 |z|_{H_0^1}^2
        + C_1 C_2 |z|_{H_0^1}^2,
    \end{split}
\end{equation}
where $C_2$ is a positive constant which will be specified later.
Secondly we estimate nonlinear term $N$.
For all $r_1 >0$, we have
\begin{equation}
  \begin{split}
           -<N(z),v> \
     \leq & \mu_0 \epsilon^{- \alpha /2} |z|_{H_0^1} |v|_{H_0^1} \\
     \leq & r_1 |v|_{H_0^1}^2 + \frac{\mu_0^2 }{4r_1 \epsilon^{\alpha}} |z|_{H_0^1}^2.
  \end{split}
\end{equation}
Comprehensively,
\begin{equation}
  \begin{split}
      & \frac{1}{2} \frac{d}{dt}|v|^2
      +\frac{\lambda_1}{2}|v|^2
      +\frac{1}{2} |v|^2_V \\
    \leq & \frac{C_1 }{C_2} |z|_{H_0^1}^2 |v|^2
      + (C_1 C_2 +r_1)|v|_{H_0^1}^2
      + \frac{C_1}{C_2} |z|^2 |z|_{H_0^1}^2
      + C_1 C_2 |z|_{H_0^1}^2
      +  \frac{\mu_0^2 }{4r_1 \epsilon^{\alpha}} |z|_{H_0^1}^2,
  \end{split}
\end{equation}
where $\lambda_1$ is the first eigenvalue of operator $A$.
Let $g_1 = \frac{C_1}{C_2} |z|^2 |z|_{H_0^1}^2
      + C_1 C_2 |z|_{H_0^1}^2
      +  \frac{\mu_0^2 }{4r_1 \epsilon^{\alpha}} |z|_{H_0^1}^2$.
Then we have
\begin{equation}\label{Equation: estimate v diff form}
    \frac{d}{dt} |v(t)|^2
      + \left( \frac{1}{2}-\frac{C_1 C_2 +r_1 }{\lambda_1^{\frac{1}{2}}} \right) |v|_V^2
      + \left( \frac{\lambda_1}{2} - \frac{C_1 |z|_{H_0^1}^2}{C_2}  \right) |v|^2
    \leq g_1 .
\end{equation}
Choose  $C_2 < \frac{\lambda_1^{\frac{1}{2}}}{2 C_1}$
and $r_1$ small enough such that $ C_1 C_2 +r_1 < \frac{\lambda_1^{\frac{1}{2}}}{2}$.
We have
\begin{equation}\label{Equation: estimate v diff simp form }
    \frac{d}{dt} |v(t)|^2
      + \left( \frac{\lambda_1}{2}  - \frac{C_1 |z|_{H_0^1}^2}{C_2}  \right) |v|^2
    \leq g_1 .
\end{equation}
By Gronwall Lemma we have
\begin{equation}
  \begin{split}
    |v(t)|^2 \leq
     |v(0)|^2 e^{ -\int_0^t \left( \frac{\lambda_1}{2} - \frac{C_1 |z(s)|_{H_0^1}^2}{C_2} \right) ds }
     + \int_0^t g_1(s_1) e^{ -\int_{s_1}^t
       \left( \frac{\lambda_1}{2} - \frac{C_1 |z(s_2)|_{H_0^1}^2}{C_2} \right)ds_2 }ds_1.
  \end{split}
\end{equation}
Thus,
\begin{equation}
    \sup_{t \in [0,T]} |v(t)|^2
    \leq e^{\frac{C_1}{C_2} \int_0^T |z(s)|_{H_0^1}^2 ds} |u_0|^2
    + \int_0^T e^{\frac{C_1}{C_2} \int_s^T |z(r)|_{H_0^1}^2 dr} g_1(s) ds.
\end{equation}
Let $c_5= C_1/C_2$ and we obtain \eqref{Equation: estimate v in CH}.

Now integrating \eqref{Equation: estimate v diff form} over $[0,T]$, we get
\begin{equation}
    |v(T)|^2 - |v(0)|^2
     + \left( \frac{1}{2} - \frac{C_1 C_2 + r_1}{\lambda_1^{\frac{1}{2}}} \right) \int_0^T |v(s)|^2_V ds
    \leq \int_0^T \frac{C_1}{C_2} |z(s)|_{H_0^1}^2 |v(s)|^2 ds
     +\int_0^T g_1(s) ds.
\end{equation}
Let $c_6 = \left( \frac{1}{2} - (C_1 C_2 +r_1)\lambda_1^{-1/2} \right)^{-1} $
and inequality \eqref{Equation: estimate v in LV} follows.
The proof is complete.
\end{proof}
Since $z \in C([0,T];V)$, the following theorem is an immediate consequence of theorem \ref{Theorem: existence of solu}
and proposition \ref{Proposition: Extention of solu}.
\begin{theorem}
   For all $ T >0$ and $u_0 \in H$,
   the equation \eqref{Problem: differential form}
   has a unique solution $u \in C([0,T];H) \cap L^2(0,T;V)$ for all $\omega \in \Omega$
   in the sense of \eqref{Equation: integral form}.
\end{theorem}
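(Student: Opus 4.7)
The plan is to combine the local existence result (Theorem \ref{Theorem: existence of solu}) with the a priori bounds of Proposition \ref{Proposition: Extention of solu} via a standard continuation argument. First I would fix $T>0$ and $\omega \in \Omega$ and define $T_{\max} \in (0,T]$ as the supremum of all times $T_0 \leq T$ for which a unique mild solution $u \in C([0,T_0];H)\cap L^2(0,T_0;V)$ of \eqref{Equation: integral form} exists. By Theorem \ref{Theorem: existence of solu} this set is nonempty, and uniqueness makes $T_{\max}$ unambiguous; the target is $T_{\max}=T$.

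Assuming for contradiction $T_{\max}<T$, I would work with $v=u-z$ and exploit Proposition \ref{Proposition: exis an regu of stoch convol}, which gives $z \in C([0,T];V)$ and hence both $\int_0^T |z(s)|^2_{H_0^1}\,ds < \infty$ and $g_1 \in L^1(0,T)$. Plugging these into \eqref{Equation: estimate v in CH} and \eqref{Equation: estimate v in LV} yields a constant $K=K(\omega,T,u_0)$ with
\begin{equation*}
    \sup_{t\in[0,T_{\max})} |v(t)|^2 + \int_0^{T_{\max}} |v(t)|_V^2\,dt \leq K.
\end{equation*}
Combined with $z \in C([0,T];V)$ this controls $|u(t)|_H$ up to $T_{\max}$.

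Next I would argue that $v$ extends continuously into $H$ at $t=T_{\max}$. Since $v$ satisfies the deterministic equation \eqref{Equation: differential form for v} with right-hand side $-B(v+z)-N(v+z)\in L^2(0,T_{\max};V')$, the mild-form representation (equivalently standard parabolic regularity for the Gelfand triple $(V,H,V')$) produces $v\in C([0,T_{\max}];H)$, so $u(T_{\max}^-)=v(T_{\max}^-)+z(T_{\max})\in H$ is well-defined. Then I would apply Theorem \ref{Theorem: existence of solu} at the restart point, with initial datum $u(T_{\max}^-)$ and shifted noise path $\theta_{T_{\max}}\omega$, to obtain a solution on $[T_{\max},T_{\max}+\delta]$ for some $\delta>0$; uniqueness allows concatenation with the existing solution, contradicting the definition of $T_{\max}$.

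The main obstacle I expect is ensuring that the restart length $\delta$ does not collapse to zero as $t\uparrow T_{\max}$. Inspecting the proof of Theorem \ref{Theorem: existence of solu}, the local time $T_0$ depends on the initial data only through $M=2(2|u_0|+|z|_X)$; since $|u(T_{\max}^-)|_H$ is controlled by $K$ and $|z|_{C([T_{\max},T_{\max}+1];V)}$ by continuity of $z$, a uniform strictly positive lower bound on $\delta$ is available, closing the contradiction and establishing global existence on $[0,T]$. Uniqueness on $[0,T]$ is inherited from the local uniqueness of Theorem \ref{Theorem: existence of solu} applied piecewise.
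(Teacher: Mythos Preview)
Your proposal is correct and follows exactly the approach the paper intends: the paper's proof of this theorem is literally the single sentence ``Since $z \in C([0,T];V)$, the following theorem is an immediate consequence of Theorem \ref{Theorem: existence of solu} and Proposition \ref{Proposition: Extention of solu},'' and your continuation argument is the standard way to unpack that sentence. Your treatment of the restart issue (that the local lifespan depends on the initial datum only through $M$, which is controlled by the a priori bound) is the expected justification and matches the spirit of the paper.
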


\section{Random attractor}

In the sequel we aim to obtain a random attractor for RDS
generated by equation \eqref{Equation: integral form} under assumption (A4).
Denote
  \begin{align}
    Z(t) = Z(\theta_t \omega) = \int_{-\infty}^{t} S(t-r)dB^H (r)  \\
    Z(\omega)
     = \lim_{n \rightarrow \infty}  \int_{-n}^0 S(-r)dB^H (r).
  \end{align}
To show that the limit exists, we have
    \begin{equation}
      \begin{split}
       \mathbb{E} \left| Z(\omega)  \right|_{H_0^1}^2
      & = \mathbb{E} \left| \int_{-\infty}^0 S(-s)dB^H(s) \right|_{H_0^1}^2 \\
      & = \mathbb{E} \left| \lim_{t \rightarrow \infty} \sum_{i=1}^{\infty}
         \int_{-t}^0 S(-s) e_i d \beta_i^H (s) \right|_{H_0^1}^2 \\
      & \leq \limsup_{t \rightarrow \infty} \sum_{i}
         \mathbb{E} \left| \int_{-t}^0 K_H^* ( S(-\cdot) e_i)(s) d \beta_i (s) \right|_{H_0^1}^2 \\
      & = \limsup_{t \rightarrow \infty} \sum_{i}
        \left| K_H^* ( S(-\cdot) e_i) \right|_{L^2(-t,0;H_0^1)}^2  \\
      & = \limsup_{t \rightarrow \infty} \sum_{i}
        \left| S(\cdot) e_i \right|_{\mathcal{H}(0,t;H_0^1)}^2.
      \end{split}
    \end{equation}
    Since
    \begin{equation}
      \begin{split}
      \left| S(\cdot) e_i \right|_{\mathcal{H}(0,t;H_0^1)}^2
      & =
         \int_0^t \int_0^t <S(u)e_i,S(v)e_i >_{H_0^1} |u-v|^{2H-2} dudv  \\
      & =
         \int_0^t \int_0^t \lambda_i^{\frac{1}{2}} e^{-\lambda_i (u+v)} |u-v|^{2H-2} dudv  \\
      & = 2 \lambda_i^{\frac{1}{2}}
         \int_0^t \int_0^u  e^{-\lambda_i (2u-x)} x^{2H-2} dxdu  \\
      & = 2 \lambda_i^{\frac{1}{2}}
          \int_0^t \int_0^{\lambda_i u}
           e^{-\lambda_i 2u +y} \frac{y}{\lambda_i}^{2H-2} \frac{1}{\lambda_i} dydu  \\
      & = 2 \lambda_i^{\frac{3}{2}-2H}
          \int_0^{\lambda_i t} \int_{\frac{y}{\lambda_i}}^t
           e^{-\lambda_i 2u} e^y y^{2H-2}  dudy  \\
      & =  \lambda_i^{\frac{1}{2}-2H}
          \int_0^{\lambda_i t}
           e^y y^{2H-2} (e^{-2y}-e^{-2 \lambda_i t}) dy  \\
      & \leq  \lambda_i^{\frac{1}{2}-2H}
          \int_0^{\infty}
           e^{-y} y^{2H-2} dy
          = \lambda_i^{\frac{1}{2}-2H} \Gamma(2H-1),
      \end{split}
    \end{equation}
     we have
    \begin{equation}
     \begin{split}
       \mathbb{E} \left| Z(\omega)  \right|_{H_0^1}^2
       &\leq \Gamma(2H-1) \sum_{i=1}^{\infty} \lambda_i^{\frac{1}{2}-2H} \\
       &\leq 2 \Gamma(2H-1) \cdot \beta_D(4H-1) \cdot \xi(4H-1)
       < \infty
     \end{split}
    \end{equation}
as $H > \frac{1}{2}$.
Thus, $Z$ (the so-called fractional Ornstein-Uhlenback process) is the unique stationary solution
of the linear stochastic evolution equation
\begin{equation}
    dZ(t)=AZ(t)+dB^H(t), \quad t \in \mathbb{R}.
\end{equation}
We need the stationary process $Z$ to construct the RDS because
when we investigate the long-time behavior of solution we encounter a generalized integration
over time variable.
And ergodic theory can help us to convert it to a integration over sample space.
Namely, Consider the real-valued continuous function $|Z(\theta_{\cdot} \omega)|_{H_0^1}^2$,
we have $|Z(\theta_{\cdot} \omega)|_{H_0^1}^2 \in L^1(\Omega, P)$.
    Since $(\Omega, \mathcal{F}, \{ \theta(t) \}_{t\in \mathbb{R}} )$
    is the metric dynamical system, we can use the Birkhoff-Chintchin Ergodic Theorem to obtain
    \begin{equation}\label{Equation: Ergodic}
     \begin{split}
        \lim_{n\rightarrow \pm \infty} \frac{1}{n}
         \int_0^n | Z (\theta_t \omega) |_{H_0^1}^2 dt
        & = \mathbb{E} | Z (\omega) |_{H_0^1}^2 \\
        & \leq  2 \Gamma(2H-1) \cdot \beta_D(4H-1) \cdot \xi(4H-1).
     \end{split}
    \end{equation}
For all $t_0 \in \mathbb{R}$, by Theorem \ref{Theorem: existence of solu},
$u(t,\omega;t_0,u_0)$ is the unique solution of the equation
\begin{equation}\label{Equation: integral form u t t0}
    u(t;t_0)
    = S(t-t_0)u_0
     - \int_{t_0}^t S(t-s) B(u(s))ds
     - \int_{t_0}^t S(t-s) N(u(s))ds
     + \int_{t_0}^t S(t-s) dB^H(s).
\end{equation}
In this section, let $u(t,\omega;t_0)=v(t,\omega;t_0) + Z(t,\omega)$, we have
\begin{equation}
  \begin{split}
      & v(t)+ \int_{-\infty}^t S(t-s) dB^H(s) \\
    = & S(t)u_0
      -\int_{t_0}^t S(t-s) B( v(s) + Z(s) ) ds
      -\int_{t_0}^t S(t-s) N( v(s) + Z(s) ) ds
      + \int_{t_0}^t S(t-s) dB^H(s) .
  \end{split}
\end{equation}
Since
\begin{equation}
    \int_{-\infty}^{t_0} S(t-s)dB^H(s)
    = S(t-t_0) Z (\theta_{t_0}\omega),
\end{equation}
$v(t,\omega;t_0,u_0-Z(\theta_{t_0}\omega))$ is the unique solution of the integral equation
\begin{equation}\label{Equation: integral form for v infty}
    v(t)
    =S(t) (u_0-Z(\theta_{t_0}\omega))
     -\int_{t_0}^t S(t-s) B( v(s) + Z(s) ) ds
     -\int_{t_0}^t S(t-s) N( v(s) + Z(s) ) ds.
\end{equation}
Then $v$ is the weak solution of the following differential equation
\begin{align}
    & \frac{dv}{dt} + A(v+Z) + B(v+Z) = 0 \label{Equation: differ form for v infinity}\\
    & v(t_0) = u_0 - Z(\theta_{t_0}\omega) \label{Equation: boundary for v infinity}
\end{align}
We can now define an continuous mapping by setting
\begin{equation}
    \varphi(t,\omega,u_0)= v(t,\omega; 0,u_0-Z(\omega)) + Z(\theta_t \omega),
    \quad \forall (t,\omega,u_0) \in \mathbb{R} \times \Omega \times H.
\end{equation}
The measurability follows from the continuity dependence of solution with respect to initial value.
the cocycle property follows from the uniqueness of solution
for all noise path $\omega \in \Omega$.
Thus, $\varphi$ is a RDS associated with \eqref{Equation: integral form}.
In the rest of this section, we will compute some estimates in spaces $H$ and $V$.
Then we use these estimates and compactness of the embedding $V \hookrightarrow H$
to obtain the existence of a compact random attractor.

\begin{lemma}\label{Lemma: absorb in H}
  If $C_1^2 > \frac{ \lambda_1^{2/3} }{8 \Gamma(2H-1) \beta_D(4H-1)  \xi(4H-1)}$,
  then there exist random radii  $\rho_H(\omega)>0$ and $\rho_1(\omega)$ such that
  for all $ M>0$ there exists $t_2(\omega) < -1$,
  such that whenever $t_0 < t_2$ and $|u_0| <M$, we have
  \begin{align}
    | v(t, \omega; t_0,u_0 - Z(\theta_{t_0}\omega))|^2 &\leq \rho_H(\omega),
        \quad \forall \ t \in [-1,0] \\
    | u(t, \omega; t_0,u_0)|^2 &\leq \rho_H(\omega) ,
         \quad \forall  \ t \in [-1,0]\\
    \int_{-1}^0 |v(t)|_V^2 dt &\leq \rho_1(\omega)  \\
    \int_{-1}^0 |v(t)+Z(t)|_V^2 dt &\leq \rho_1(\omega)
  \end{align}
\end{lemma}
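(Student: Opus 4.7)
The strategy is to apply, in pullback form and on the interval $[t_0, 0]$, the energy argument from the proof of Proposition \ref{Proposition: Extention of solu}, and then use the ergodic estimate \eqref{Equation: Ergodic} to kill the initial-value contribution as $t_0 \to -\infty$. Repeating that argument for equation \eqref{Equation: differ form for v infinity} satisfied by $v(\cdot, \omega; t_0, u_0 - Z(\theta_{t_0}\omega))$, and with the same choice of auxiliary constants $C_2$, $r_1$, I would obtain the pointwise bound
\[
\frac{d}{dt}|v(t)|^2 + \Bigl(\tfrac{1}{2} - \tfrac{C_1 C_2 + r_1}{\lambda_1^{1/2}}\Bigr)|v(t)|^2_V + \Bigl(\tfrac{\lambda_1}{2} - \tfrac{C_1}{C_2}|Z(\theta_t\omega)|^2_{H_0^1}\Bigr)|v(t)|^2 \leq g_1(\theta_t\omega),
\]
where $g_1$ is the same stationary functional of $Z$ as in that proof, lying in $L^1(\Omega, P)$.

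Gronwall's inequality on $[t_0, t]$ with $t \in [-1, 0]$ then yields
\[
|v(t)|^2 \leq e^{-\int_{t_0}^t h(\theta_s\omega)\, ds}\,|u_0 - Z(\theta_{t_0}\omega)|^2 + \int_{t_0}^t e^{-\int_s^t h(\theta_r\omega)\, dr}\, g_1(\theta_s\omega)\, ds,
\]
with $h(\omega) := \lambda_1/2 - (C_1/C_2)|Z(\omega)|^2_{H_0^1}$. The ergodic theorem \eqref{Equation: Ergodic} gives, for $P$-a.e.\ $\omega$,
\[
\frac{1}{|t_0|}\int_{t_0}^{0} h(\theta_s\omega)\, ds \xrightarrow[t_0 \to -\infty]{} \frac{\lambda_1}{2} - \frac{C_1}{C_2}\,\mathbb{E}|Z(\omega)|^2_{H_0^1}.
\]
Under the hypothesis on $C_1$, and after selecting $C_2$ just below the threshold $\lambda_1^{1/2}/(2C_1)$ allowed by the earlier proof, this limit is a strictly positive number $\eta > 0$. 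Hence there exists a random $t_2(\omega) < -1$ such that for all $t_0 < t_2$ and all $t \in [-1, 0]$, $\int_{t_0}^t h(\theta_s\omega)\, ds \geq (\eta/2)|t_0|$. Since $|Z(\theta_{t_0}\omega)|$ grows at most sub-exponentially in $|t_0|$ (standard Borel--Cantelli estimate for the stationary Gaussian process $Z$), the first term of the Gronwall bound vanishes as $t_0 \to -\infty$ uniformly in $|u_0| \leq M$ and in $t$, while the second term is majorised by a convergent improper integral whose value is an $\omega$-measurable random variable $\rho_H(\omega) > 0$. The bound on $|u(t)|^2$ follows from $|u|^2 \leq 2|v|^2 + 2|Z(\theta_t\omega)|^2$, absorbing $\sup_{t \in [-1, 0]} |Z(\theta_t\omega)|^2$ into a possibly enlarged $\rho_H$.

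For the $V$-estimates I would integrate the full differential inequality over $[-1, 0]$, retaining the strictly positive multiple of $|v(s)|^2_V$ on the left-hand side and estimating the right using the already-established $\sup_{t \in [-1, 0]} |v(t)|^2 \leq \rho_H(\omega)$ together with $\int_{-1}^0 g_1(\theta_s\omega)\, ds < \infty$ and $\int_{-1}^0 |Z(\theta_s\omega)|^2_{H_0^1}\, ds < \infty$. The resulting $\int_{-1}^0 |v(s)|^2_V\, ds \leq \rho_1(\omega)$ then implies the final bound via $|v+Z|^2_V \leq 2|v|^2_V + 2|Z|^2_V$ and the path-continuity of $Z$ in $V$ on the compact interval $[-1, 0]$ (absorbed into $\rho_1$).

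The main obstacle is the numerical calibration between $C_1$, $C_2$, the constraint $C_2 < \lambda_1^{1/2}/(2C_1)$ inherited from Proposition \ref{Proposition: Extention of solu}, and the explicit bound $\mathbb{E}|Z|^2_{H_0^1} \leq 2\Gamma(2H-1)\beta_D(4H-1)\xi(4H-1)$ from the computation preceding the lemma: matching these is precisely what produces the hypothesis on $C_1$. A secondary technical point is the sub-exponential growth of $|Z(\theta_{t_0}\omega)|$, which must be invoked carefully so that the initial-value factor genuinely vanishes uniformly on the compact time window $[-1, 0]$.
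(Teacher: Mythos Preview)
Your proposal is correct and follows essentially the same route as the paper: derive the energy inequality of Proposition~\ref{Proposition: Extention of solu} with $Z$ in place of $z$, apply Gronwall on $[t_0,t]$, use the ergodic limit \eqref{Equation: Ergodic} together with the calibration of $C_2$ against the hypothesis on $C_1$ to make the exponential factor decay, and then integrate the full inequality over $[-1,0]$ for the $V$-bounds. The only cosmetic difference is that the paper handles the growth of $g_2$ (and implicitly of $|Z(\theta_{t_0}\omega)|$) by citing Lemma~2.6 of \cite{Maslowski Schmalfus} for at most polynomial growth, whereas you invoke a Borel--Cantelli argument for sub-exponential growth; either suffices to kill the initial-value term and to make the improper integral defining $\rho_H$ converge.
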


\begin{proof}
  The proof is base on Proposition \ref{Proposition: Extention of solu}.
  Let $g_2
      =\frac{C_1}{C_2} |Z|^2 |Z|_{H_0^1}^2
      + C_1 C_2 |Z|_{H_0^1}^2
      +  \frac{\mu_0^2 }{4r_1 \epsilon^{\alpha}} |Z|_{H_0^1}^2$.
  Similar to \eqref{Equation: estimate v diff form} we have
  \begin{equation}\label{Equation: estimate v infty diff form}
    \frac{d}{dt} |v|^2
      + \left( \frac{1}{2}-\frac{C_1 C_2 +r_1 }{\lambda_1^{\frac{1}{2}}} \right) |v|_V^2
      + \left( \frac{\lambda_1}{2} - \frac{C_1 |Z|_{H_0^1}^2}{C_2}  \right) |v|^2
    \leq g_2 .
  \end{equation}
By assumption we can choose
$C_2 \in (
  \frac{4 C_1  \Gamma(2H-1)  \beta_D(4H-1)  \xi(4H-1)}{\lambda_1},
   \frac{\lambda_1^{\frac{1}{2}}}{2C_1})$
and $r_1$ small enough and we have
\begin{equation}\label{Equation: estimate v infty diff simp form }
    \frac{d}{dt} |v|^2
      + \left( \frac{\lambda_1}{2}  - \frac{C_1 |Z|_{H_0^1}^2}{C_2}  \right) |v|^2
    \leq g_2 .
\end{equation}
By Gronwall inequality, when $t\in [-1,0]$ and $t_0<-1$, we have
\begin{equation}
  \begin{split}
      & |v(t)|^2  \\
     \leq & |v(t_0)|^2 e^{ -\int_{t_0}^{t} \left( \frac{\lambda_1}{2} - \frac{C_1 |Z(s)|_{H_0^1}^2}{C_2} \right) ds }
       + \int_{t_0}^{t} g_2(s_1) e^{ -\int_{s_1}^{t}
       \left( \frac{\lambda_1}{2} - \frac{C_1 |Z(s_2)|_{H_0^1}^2}{C_2} \right)ds_2 }ds_1 \\
     \leq & |v(t_0)|^2 e^{ -\int_{t_0}^{0} \left( \frac{\lambda_1}{2} - \frac{C_1 |Z(s)|_{H_0^1}^2}{C_2} \right) ds }
       + \int_{t_0}^{0} g_2(s_1) e^{ -\int_{s_1}^{0}
       \left( \frac{\lambda_1}{2} - \frac{C_1 |Z(s_2)|_{H_0^1}^2}{C_2} \right)ds_2 }ds_1.
  \end{split}
\end{equation}
Due to the ergodic property of fractional O-U process \eqref{Equation: Ergodic}, we have
\begin{equation}
    \lim_{t_0 \rightarrow -\infty} \frac{1}{-t_0}
    \int_{t_0}^0 |Z(s)|^2_{H_0^1} ds
    = \mathbb{E} |Z(\omega)|_{H_0^1} .
\end{equation}
Choose $r_2$ small enough such that
\begin{equation}
    \frac{C_1}{C_2} \mathbb{E} |Z(\omega)|_{H_0^1}
    \leq \frac{C_1}{C_2}
    2 \Gamma(2H-1) \beta_D(4H-1) \xi (4H-1)
    < \frac{\lambda_1}{2} -r_2.
\end{equation}
Then there exists $t_1(\omega) < -1$, such that when $t_0 < t_1$ we have
\begin{equation}
    |v(t)|^2
    \leq e^{(1+t_0)r_2} |u_0|^2
    + \int_{t_0}^{0} e^{(1+t_0)r_2} g_2(s) ds,
         \quad \forall t \in [-1,0].
\end{equation}
By Lemma 2.6 of \cite{Maslowski Schmalfus},
$g_2$ has at most polynomial growth as $t_0 \rightarrow -\infty$ for P-a.s.
$\omega \in \Omega$.
Thus, we have
\begin{equation}
    \int_{t_0}^{0} g_2(s) e^{(1+s)r_2} ds
    \leq \int_{-\infty}^{0} g_2(s) e^{(1+s)r_2} ds
    \leq \infty, \quad \text{P-a.s.}
\end{equation}
Let $\rho_H = 4\int_{-\infty}^{0} g_2(s) e^{(1+s)r_2} ds + 2 \sup_{t \in [-1,0]} |Z(t)|^2$
and there exists $t_2(\omega) < t_1(\omega) < -1$ such that for all $|u_0| \leq M$
\begin{align}
    |v(-1,\omega ; t_0,u_0-Z(\theta_{t_0}\omega))|^2
      &\leq 2\int_{-\infty}^{0} g_2(s) e^{(1+s)r_2} ds \\
    \begin{split}
    |u(-1,\omega ; t_0,u_0)|^2
      &\leq 2 |v(-1,\omega ; t_0,u_0-Z(\theta_{t_0}\omega))|^2
      + 2 \sup_{t \in [-1,0]} |Z(t)|^2 \\
      &\leq \rho_H(\omega), \quad  \forall t_0 < t_2, t \in [-1,0].
    \end{split}
\end{align}
In the following we consider a bound of $\int_{-1}^0 |v(t)|_V^2 dt$.
Integrating \eqref{Equation: estimate v infty diff form} over $[-1,0]$ we have
\begin{equation}
    |v(0)|^2 - |v(-1)|^2 + c_6^{-1} \int_{-1}^0 |v(t)|_V^2 dt
    \leq \int_{-1}^{0} g_2(t)dt + \int_{-1}^0 ( \frac{C_1}{C_2} |Z(t)|_1^2 )|v(t)|^2 dt.
\end{equation}
When $t_0<t_2$ we have
\begin{equation}
      \int_{-1}^0 |v(t)|_V^2 dt
     \leq c_6 (
      \int_{-1}^{0} g_2(t)dt
      +  \frac{C_1 \rho_H}{C_2} \int_{-1}^0 |Z(t)|_1^2  dt
      + |v(-1)|^2)
     \triangleq C(\omega).
\end{equation}
Similarly,
\begin{equation}
      \int_{-1}^0 |v(t)+Z(t)|_V^2 dt
     \leq 2c_6 (
      \int_{-1}^{0} g_2(t)dt
      +  \frac{C_1 \rho_H}{C_2} \int_{-1}^0 |Z(t)|_1^2  dt
      + 2 \int_{-1}^0 |Z(t)|_V^2 dt)
     \triangleq \widetilde{C}(\omega) .
\end{equation}
Let $\rho_1(\omega) = \max \{C(\omega),\widetilde{C}(\omega)\}$ and the proof is complete.
\end{proof}

\begin{lemma}\label{Lemma: absorb in V}
  Under the assumption of Lemma \ref{Lemma: absorb in H},
  there exists a random radius $\rho_V(\omega)$ such that for all $M>0$
  and $|u_0| < M $, there exists $t_2(\omega)<-1$ such that P-a.s.
  \begin{align}
     |v(t,\omega;t_0,u_0 - Z(\theta_{t_0}\omega))|_1^2
        & \leq \rho_V(\omega), \\
     |u(t,\omega;t_0,u_0 )|_1^2
        & \leq \rho_V(\omega), \quad \forall t_0 < t_2, t \in [-\frac{1}{2} ,0].
  \end{align}
\end{lemma}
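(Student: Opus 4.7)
The plan is the standard uniform Gronwall argument: Lemma~\ref{Lemma: absorb in H} already supplies the time-integrated bound $\int_{-1}^0|v(s)|_V^2\,ds\le \rho_1(\omega)$ together with the $L^\infty(-1,0;H)$-estimate $|v|^2\le\rho_H(\omega)$, and the goal is to upgrade this to a pointwise $V$-bound on the shorter interval $[-\tfrac12,0]$. Since $V\hookrightarrow H_0^1$ is continuous, such a bound yields the stated estimate $|v(t)|_1^2\le \rho_V(\omega)$, and the bound on $u=v+Z$ follows from the pathwise continuity of $Z$ in $V$ guaranteed by Proposition~\ref{Proposition: exis an regu of stoch convol} under assumption (A4).

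To obtain the required differential inequality, I would take the $H$-scalar product of the evolution equation~\eqref{Equation: differ form for v infinity} with $Av$, using $(\partial_t v,Av)=\tfrac12\frac{d}{dt}a(v,v)$ and the equivalence $a(v,v)\sim|v|_V^2$, to arrive schematically at
\begin{equation*}
\frac{d}{dt}|v|_V^2 + c\,|Av|^2 \;\le\; |\langle B(v+Z),Av\rangle| + |\langle N(v+Z),Av\rangle| + |(AZ,Av)|.
\end{equation*}
Splitting $B(v+Z)=B(v,v)+B(v,Z)+B(Z,v)+B(Z,Z)$ and using the two-dimensional Gagliardo--Nirenberg interpolation $|w|_{L^\infty}\le C|w|^{1/2}|w|_V^{1/2}$ on the test factor, the bound $|N(w)|_{V'}\le C|w|_V$ already used in Lemma~\ref{Lemma: estimate J2}, and Young's inequality, all three terms on the right can be split into $\tfrac{c}{2}|Av|^2$ (absorbed into the dissipation) plus a remainder of the form $\alpha(t)|v|_V^2+h(t)$. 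This leaves
\begin{equation*}
\frac{d}{dt}y(t)\le \alpha(t)\,y(t) + h(t),\qquad y(t):=|v(t)|_V^2,\ t\in[-1,0],
\end{equation*}
where $\alpha,h\ge 0$ depend only on $|v(t)|^2$ (controlled by $\rho_H(\omega)$), on $|v(t)|_V^2$ (integrable by Lemma~\ref{Lemma: absorb in H}), and on $|Z(t)|_{H_0^1}$, $|Z(t)|_V$ (pathwise continuous on $[-1,0]$).

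Next I would apply the uniform Gronwall lemma (cf.~\cite{RogerTemam}) on sliding windows $[\sigma,\sigma+\tfrac12]\subset[-1,0]$. Because $\int_{-1}^0 y(s)\,ds\le \rho_1(\omega)$ by Lemma~\ref{Lemma: absorb in H} and $\int_{-1}^0\alpha\,ds$, $\int_{-1}^0 h\,ds$ are $\omega$-measurable and finite a.s., the uniform Gronwall conclusion yields
\begin{equation*}
y(t) \;\le\; \Bigl(2\rho_1(\omega)+\int_{-1}^0 h\,ds\Bigr)\exp\Bigl(\int_{-1}^0\alpha\,ds\Bigr)\;=:\;\rho_V(\omega)
\end{equation*}
for every $t\in[-\tfrac12,0]$, all $t_0<t_2(\omega)$ and all $|u_0|\le M$. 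The bound on $|u(t)|_1^2=|v(t)+Z(t)|_1^2$ follows by adding $\sup_{s\in[-1,0]}|Z(s)|_V^2$, enlarging $\rho_V$ if necessary.

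The main obstacle is a careful treatment of the convective term $\langle B(v+Z),Av\rangle$: because $v$ is only known to lie in $V$ (not yet in $D(A)$), the pairing has to be read in the duality $V'\times V$ or else justified rigorously through a Galerkin approximation along the eigenbasis $\{e_i\}$ of $A$ followed by a passage to the limit. Equally delicate is ensuring that the interpolation constants are sharp enough so that every $|v|_V^2$-factor produced by the splitting can be absorbed either into the dissipation $|Av|^2$ or into the coefficient $\alpha(t)$, whose integrability on $[-1,0]$ is exactly what Lemma~\ref{Lemma: absorb in H} provides.
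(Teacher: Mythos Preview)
Your proposal is correct and follows the same overall strategy as the paper: derive a differential inequality for a higher-order norm of $v$, then apply the uniform Gronwall argument on $[-1,0]$ using the $L^2(-1,0;V)$ and $L^\infty(-1,0;H)$ bounds already supplied by Lemma~\ref{Lemma: absorb in H}. The only substantive difference is the choice of test function. The paper multiplies \eqref{Equation: differ form for v infinity} by $-\Delta v$ rather than $Av$, obtaining
\[
\tfrac12\frac{d}{dt}|v|_1^2+|v|_3^2\le |b(v+Z,v+Z,\Delta v)|+|\langle N(v+Z),-\Delta v\rangle|,
\]
and after Gagliardo--Nirenberg and Young reduces this to $\frac{d}{dt}|v|_1^2\le g_3(t)+g_4(t)|v|_1^2$ with $g_4=C|v+Z|^2|v|_1^2$; the ``integrate in $s$ over $[-1,t]$'' step is exactly your uniform Gronwall. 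Testing with $-\Delta v$ hits the target norm $|\cdot|_1$ directly and keeps all pairings at the level $b(\,\cdot\,,\,\cdot\,,\Delta v)$ with $\Delta v\in H$, so no Galerkin detour is needed. Your choice of $Av$ yields a stronger intermediate conclusion (a bound on $|v|_V$, not merely $|v|_1$) at the price of the regularity justification you flag.

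One small correction: the equation for $v$ is $\partial_t v+Av+B(v+Z)+N(v+Z)=0$ (the display~\eqref{Equation: differ form for v infinity} contains a typo, writing $A(v+Z)$ and omitting $N$), so the term $(AZ,Av)$ should not appear in your inequality. This is fortunate, since under (A4) the process $Z$ is only known to take values in $V$, not in $D(A)$.
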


\begin{proof}
  Integrating \eqref{Equation: differ form for v infinity} with $- \triangle v$ over $\mathcal{O}$,
  we have
  \begin{equation}\label{Estimate: in V}
     \frac{1}{2} \frac{d}{dt} |v|_1^2 + |v|_3^2
    \leq |b(v+Z,v+Z, \triangle v)| - <N(v+Z), -\triangle v>
  \end{equation}
  By using the Gagliardo-Nirenberg inequality and Young inequality,
  we have
  \begin{equation}
    \begin{split}
           & |b(v+Z,v+Z,\triangle v)| \\
      \leq & C |v+Z|^{1/2} \cdot |v+Z|_2^{1/2} \cdot  |v+Z|_1  \cdot |v|_2 \\
      \leq & C |v+Z|^{1/2} \cdot  |v+Z|_1  \cdot |v|_2^{3/2}
        + C |v+Z|^{1/2}  \cdot |v+Z|_1  \cdot |Z|_2^{1/2} \cdot  |v|_2 \\
      \leq & \frac{\lambda_1}{8} |v|_1^2
        + \frac{54C^4}{\lambda_1^3} |v+Z|^2  \cdot |v+Z|_1^4
        + \frac{\lambda_1}{8} |v|_2^2
        + \frac{2C^2}{\lambda_1} |v+Z| \cdot |v+Z|_1^2  \cdot |Z|_2
    \end{split}
  \end{equation}
  Inspired by \cite{GuoGuo}, we have the following estimate
  \begin{equation}
    \begin{split}
         -<N(v+Z, \triangle v)>
      \leq & \mu_0 \epsilon^{-\frac{\alpha}{2}} \int_{\mathcal{O}} |e_{ij}(v+Z)e_{ij}(\triangle v)| dx \\
      \leq & \mu_0 \epsilon^{-\frac{\alpha}{2}} |v+Z|_1 \cdot  |v|_3 \\
      \leq & \frac{1}{4} |v|_3^2 + \frac{\mu_0^2}{\epsilon^{\alpha}} |v+Z|_1^2 .
    \end{split}
  \end{equation}
  Let
  \begin{align}
    g_3(t)
     &= \frac{54 C }{\lambda_1^3} |v+Z|^2  \cdot |Z|_1^4
     + \frac{ 2C^2 }{\lambda_1} |v+Z| \cdot |v+Z|_1^2  \cdot |Z|_2
     + \frac{\mu_0^2}{\epsilon^{\alpha}} |v+Z|_1^2 ,\\
    g_4(t)
     &= \frac{54 C }{\lambda_1^3} |v+Z|^2  \cdot |v|_1^2.
  \end{align}
  Rewrite \eqref{Estimate: in V} and we have
  \begin{equation}
    \frac{d}{dt} |v(t)|_1^2
    \leq g_3(t) + g_4(t) |v(t)|_1^2
  \end{equation}
  By the variation of constant formula,
  for all $-1 \leq s \leq t \leq 0$ we have
  \begin{equation}
   \begin{split}
    |v(t)|_1^2
      & \leq |v(s)|_1^2 \cdot e^{\int_s^t g_4(s_1)ds_1} +
        e^{\int_s^t g_4(s_1)ds_1} \cdot \int_s^t g_3(s_2) e^{-\int_{s}^{s_2} g_4(s_1)ds_1} ds_2 \\
      & \leq \left( |v(s)|_1^2  + \int_{-1}^0 g_3(s_2)ds_2 \right)
        \cdot e^{\int_{-1}^0 g_4(s_1)ds_1}
   \end{split}
  \end{equation}
  Integrating $s$ over $[-1,t]$ we have
  \begin{equation}
    (1+t)|v(t)|_1^2
    \leq \left( \int_{-1}^{0}|v(s)|_1^2 ds + \int_{-1}^0 g_3(s)ds \right)
      \cdot e^{\int_{-1}^0 g_4 (s)ds}
  \end{equation}
  Notice that all the terms of $\int_{-1}^{0} g_3(s)ds $,
  $\int_{-1}^{0} g_4(s)ds $ and $\int_{-1}^{0}|v(s)|_1^2 ds$ are bounded
  as $t_0 \rightarrow -\infty$.
  Therefore,
  \begin{equation}
    |v(t)|_1^2 \leq C(\omega), \quad \forall t_0 < t_2, t \in [-\frac{1}{2},0].
  \end{equation}
  And there exists a random radius $\rho_V(\omega)$ such that
  \begin{equation}
    \begin{split}
       |u(t,\omega;t_0,u_0 )|_1^2
        & \leq 2 |v(t,\omega;t_0,u_0 - Z(\theta_{t_0}\omega))|_1^2
          + \sup_{t \in [-\frac{1}{2},0]} |Z(t)|_1^2 \\
        & \leq \rho_V(\omega), \quad \forall t_0 < t_2, t \in [-\frac{1}{2} ,0].
    \end{split}
  \end{equation}
  Especially when $t=0$, we have
  \begin{equation}
       |u(0,\omega;t_0,u_0 )|_1^2
        \leq \rho_V(\omega), \quad \forall t_0 < t_2.
  \end{equation}
\end{proof}

Denote $\dot{H}^1 = $ the closure of $\mathcal{V}$ in $\left( H^1(\mathcal{O})\right)^2$.
Lemma \ref{Lemma: absorb in V} shows that
there exists a bounded random ball in $\dot{H}^1$
which absorbs any bounded non-random subset of $H$.
Since $\dot{H}^1$ is compactly embedded in $H$,
we have establish the existence of a compact random absorbing set in $H$.
By Proposition \ref{Proposition: exi rand attr} we state our the final theorem:

\begin{theorem}
  If
$C_1^2 > \frac{ \lambda_1^{2/3} }{8 \Gamma(2H-1) \beta_D(4H-1)  \xi(4H-1)}$,
  the random dynamical system associated with \eqref{Equation: integral form}
  has a random attractor.
\end{theorem}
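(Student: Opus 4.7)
The plan is to invoke Proposition \ref{Proposition: exi rand attr}, which reduces the existence of a random attractor to producing a single compact random set that absorbs every deterministic bounded $B \subset H$. My candidate is
\[
 K(\omega) \;:=\; \{\, u \in \dot{H}^1 : |u|_1^2 \le \rho_V(\omega) \,\},
\]
regarded as a subset of $H$, where $\rho_V(\omega)$ is the random radius furnished by Lemma \ref{Lemma: absorb in V}.

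First I would check that $K$ is a random set in the sense of the earlier definition. Measurability of $\rho_V$ can be read off from its construction in the proof of Lemma \ref{Lemma: absorb in V}: it is assembled from integrals of $|Z(\theta_\cdot \omega)|_{H_0^1}^2$ and other measurable functionals of the canonical fBm path, so $\omega \mapsto d(x, K(\omega))$ is measurable for every $x \in H$. Next, given bounded $B \subset H$ with $|u_0| \le M$ for all $u_0 \in B$, I would choose $t_2(\omega) < -1$ as in Lemma \ref{Lemma: absorb in V}; then for every $t_0 < t_2(\omega)$ one has $|u(0,\omega;t_0,u_0)|_1^2 \le \rho_V(\omega)$. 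Using the identity $\varphi(t,\theta_{-t}\omega,u_0) = u(0,\omega;-t,u_0)$ built into the stationary formulation of the cocycle, substituting $t_0 = -t$ yields $\varphi(t,\theta_{-t}\omega)B \subset K(\omega)$ for all $t \ge |t_2(\omega)|$, which is precisely the absorption property required by Proposition \ref{Proposition: exi rand attr}.

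Third, compactness of $K(\omega)$ in $H$ follows from the compact Sobolev embedding $\dot{H}^1 \hookrightarrow H$ (Rellich--Kondrachov on the bounded square $\mathcal{O}$): closed balls of $\dot{H}^1$ are bounded, hence relatively compact, in $H$, while closedness of $K$ in $H$ comes from the lower semicontinuity of the $\dot{H}^1$-norm with respect to $H$-convergence. Proposition \ref{Proposition: exi rand attr} then produces the random attractor
\[
 A(\omega) = \overline{\bigcup_{B \subset H,\ B \text{ bounded}} \Omega_B(\omega)}.
\]

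The spectral hypothesis $C_1^2 > \lambda_1^{2/3}/(8\Gamma(2H-1)\beta_D(4H-1)\xi(4H-1))$ plays no direct role in this final synthesis; it has already been spent inside Lemmas \ref{Lemma: absorb in H}--\ref{Lemma: absorb in V}, where via the Birkhoff--Khinchin identity \eqref{Equation: Ergodic} it ensures that the effective decay rate $\lambda_1/2 - (C_1/C_2)\mathbb{E}|Z(\omega)|_{H_0^1}^2$ is strictly positive, so the Gronwall-type estimates deliver finite random absorbing radii in both $H$ and $\dot{H}^1$. Consequently the only obstacle at this last step is organizational rather than analytic: the measurability bookkeeping for $\rho_V$ and the correct pullback form of the cocycle, both of which reduce to standard manipulations of the canonical realization constructed at the end of Section 2.
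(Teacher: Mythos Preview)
Your proposal is correct and follows essentially the same route as the paper: after Lemma~\ref{Lemma: absorb in V} the paper simply observes that the bounded $\dot{H}^1$-ball absorbs bounded deterministic sets, that the embedding $\dot{H}^1 \hookrightarrow H$ is compact, and then invokes Proposition~\ref{Proposition: exi rand attr}. You spell out the measurability of $\rho_V$, the pullback identity $\varphi(t,\theta_{-t}\omega,u_0)=u(0,\omega;-t,u_0)$, and the closedness of the ball in $H$ more carefully than the paper does, but the argument is the same.
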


\section*{Acknowledgments}
The authors are grateful to Mar\'ia J. Garrido-Atienza, Bohdan Maslowski and Bj\"orn Schmalfuss for their kindly discussions and suggestion 
concerning the general framework of stochastic equations driven by fractional Brownian motion.



\vspace{1cm}
{\large Jin Li}  (the corresponding author)

Email: talent\_tim@yahoo.com.cn \\

{\large Jianhua Huang}

Email: jhhuang@nudt.edu.cn

\end{document}